\pdfoutput=1
\documentclass[10pt,pra,aps,superscriptaddress,nofootinbib,twocolumn,longbibliography]{revtex4-1}
\usepackage{amsmath,bbm}
\usepackage{amsthm}
\usepackage{amsmath}
\usepackage{latexsym}
\usepackage{amssymb}
\usepackage{float}
\usepackage{quantikz}
\usepackage{tikz}
\usepackage{graphicx}           
\usepackage{color}
\usepackage{xcolor}
\usepackage{mathpazo}
\usepackage{comment}
\usepackage{enumitem}
\usepackage{multirow}
\usepackage{setspace}
\usepackage{subcaption}
\usepackage[colorlinks=true,linkcolor=blue,citecolor=magenta,urlcolor=blue]{hyperref}
\usepackage{svg}

\newcommand{\be}{\begin{equation}}
\newcommand{\ee}{\end{equation}}
\newcommand{\bea}{\begin{eqnarray}}
\newcommand{\eea}{\end{eqnarray}}
\def\squareforqed{\hbox{\rlap{$\sqcap$}$\sqcup$}}
\def\qed{\ifmmode\squareforqed\else{\unskip\nobreak\hfil
\penalty50\hskip1em\null\nobreak\hfil\squareforqed
\parfillskip=0pt\finalhyphendemerits=0\endgraf}\fi}
\def\endenv{\ifmmode\;\else{\unskip\nobreak\hfil
\penalty50\hskip1em\null\nobreak\hfil\;
\parfillskip=0pt\finalhyphendemerits=0\endgraf}\fi}

\newcommand{\tr}{\text{Tr}}
\newcommand{\I}{\mathbbm{1}}

\makeatletter
\newtheorem*{rep@theorem}{\rep@title}
\newcommand{\newreptheorem}[2]{%
\newenvironment{rep#1}[1]{%
 \def\rep@title{#2 \ref{##1}}%
 \begin{rep@theorem}}%
 {\end{rep@theorem}}}
\makeatother

\newtheorem{thm}{Theorem}
\newreptheorem{thm}{Theorem}
\newtheorem{lemma}{Lemma}
\newtheorem{definition}{Definition}

\newtheorem{obs}{Observation}

\begin{document}

\title{Nonlocality without entanglement for multipartite quantum measurements}


\author{Satyaki Manna}
\author{Debashis Saha}
\affiliation{Department of Physics, School of Basic Sciences, Indian Institute of Technology Bhubaneswar, Odisha 752051, India}


\begin{abstract}
We show that a pair of two-qubit product measurements can be perfectly distinguished when accessed jointly, yet cannot be perfectly distinguished when their local components are distributed between distant parties restricted to local quantum operations and classical communication (LOCC). This establishes a measurement analogue of nonlocality without entanglement that, unlike its state-based counterpart, emerges in a remarkably simple setting. We further uncover a hierarchy of LOCC discrimination strategies: adaptive protocols, in which the probe used at one site may depend on classical information obtained from another, can strictly outperform restricted non-adaptive protocols, as demonstrated by a set of six bipartite qutrit product measurements. We also show that LOCC distinguishability of bipartite measurements can be intrinsically directional, depending on which party initiates the protocol. More generally, we construct pairs of $n$-qubit product measurements that are globally perfectly distinguishable but cannot be perfectly distinguished across any bipartition, even when all parties on either side of the bipartition are allowed arbitrary joint quantum operations. Finally, we show how quantum data hiding based on quantum-process nonlocality without entanglement can be realized using elementary optical components.
\end{abstract}

\maketitle


\section{Introduction}
\emph{Nonlocality Without Entanglement} \cite{Benett} is one of the most astonishing and unique signatures of quantum theory. This phenomenon occurs when a set of orthogonal, multipartite product states cannot be perfectly distinguished using local operations and classical communication. It highlights a striking manifestation of quantum nonlocality, where the composite system exhibits properties that cannot be inferred from its individual subsystems, even though no entanglement is present. Owing to its fundamental significance and wide-ranging applications in quantum information theory, this phenomenon has been the subject of extensive investigation over the years \cite{benett2,Walgate,Walgate2,watrous05,Heng,ha2021quantum,He2024,Halder_prl,cohen2,nathanson2005distinguishing,childs2013,chitambar,supic,manna2026nonlocality,fan2,Shi2022}. 

States, transformations, and measurements constitute the basic
operational ingredients of quantum theory. It is therefore natural to
ask whether nonlocality without entanglement (NLWE) is a property peculiar to quantum states, or whether it can arise intrinsically at the level of quantum processes. While discrimination of quantum transformations, particularly unitary processes, has been studied under various spatial and temporal constraints \cite{Manna_pra_2026,entangledstatesuseful,bavaresco2022unitary,duan,acin,acin2,imam2026,manna2026,Manna_2026,lien2026,bavaresco,nidhin,Piani,Nuradha_2025,vishal,111,Mishra_2024}, here we address this question for quantum
measurements.

Remarkably, we show that a pair of two-qubit product measurements can
be perfectly distinguished when accessed jointly using an entangled
probe, yet cannot be perfectly distinguished when their local
components are distributed between distant parties restricted to arbitrary LOCC. This establishes a measurement analogue of NLWE. In contrast to the state-based phenomenon, which requires a collection of five qutrit-qutrit product states \cite{benett2,Walgate2}, the measurement analogue already emerges for a \emph{pair} of product measurements in the smallest bipartite Hilbert space. 

The active nature of measurements, however, introduces features with
no direct counterpart in state discrimination. In particular, local measurement discrimination permits adaptive probing: the probe chosen at one site may depend on classical information obtained from another. We show that this freedom generates a strict hierarchy of local
discrimination strategies. A set of six bipartite qutrit product
measurements can be perfectly distinguished by an adaptive protocol but not when such adaptive probing is forbidden. Moreover, local distinguishability can be intrinsically directional, that is, whether a set of
bipartite measurements is perfectly distinguishable may depend on which party initiates the protocol.

We finally ask whether measurement-based NLWE can be genuinely multipartite, in the sense that it persists across every bipartition even after all parties on each side are allowed to act jointly \cite{Halder_prl}. We answer this question affirmatively by constructing, for arbitrary $n$, \emph{pairs} of $n$-qubit product measurements that are perfectly distinguishable globally but cannot be perfectly distinguished across any bipartition. This provides a particularly strong separation between global and distributed access to otherwise elementary product measurements.

Beyond its foundational implications, this process-based perspective offers a simple route to quantum data hiding, where classical information can be hidden directly in the configuration of robust macroscopic devices, such as optical elements, rather than in fragile quantum states, providing an experimentally accessible application of NLWE.



We begin by revisiting the problem of distinguishing a set of quantum states and measurements, and then formalize the notion of global and LOCC protocols.

\section{Distinguishability of quantum states}

\begin{definition}[Perfect discrimination]
\label{def:dist}
Let $\{\rho_i\}_i$ be a set of quantum states and let
$\{q_i\}_i$ be a set of non-negative weights.
We denote by
$\mathcal{P}\big[\{\rho_i\}_i,\{q_i\}_i\big]
$ a protocol that perfectly distinguishes the states $\{\rho_i\}_i$
with weights $\{q_i\}_i$. That is, there exists a POVM $\{M_i\}_i$, with
$M_i\geq 0, \sum_i M_i=\mathbb{I},$
such that for every $i$ with $q_i>0$,
$\operatorname{Tr}(\rho_i M_i)=1.$
\end{definition}
Note that, the way this distinguishability is defined here, we do not impose $\sum_i q_i=1$, i.e., $\{q_i\}_i$ may not be a probability distribution, on purpose. In the course of the paper, we will see that, measurement distinguishability will be related to the problem of distinguishability of the states such that the associated coefficients $\{q_i\}_i$ may not form a probability distribution. Thus, not taking $\{q_i\}_i$ as a probability distribution will help us to simplify the expression of distinguishability of quantum measurements.

Similarly, we define the discrimination protocol for multipartite states when the parties are restricted to local operations and classical communication.
\begin{definition}[Perfect LOCC discrimination]
\label{def:locc-dist}
Let $
\rho_i=\bigotimes_{k=1}^{N}\rho_i^{(k)}$ be a set of multipartite product states, and let $\{q_i\}_i$ be a
set of non-negative weights. We denote by
$\mathcal{P}_{\mathrm{LOCC}}\left[\left\{
\bigotimes_{k=1}^{N}\rho_i^{(k)}\right\}_i,\{q_i\}_i \right]$
an LOCC protocol that perfectly distinguishes these states. That is, the protocol produces a classical guess $\hat{i}$ such that
$p(\hat{i}=i\,|\,\rho_i)=1$ for every $i$ with $q_i>0$.
\end{definition}

\section{Distinguishability of quantum measurements}
Quantum measurement having $m$ distinct outcome is defined by a set of operators,
$
M :=\{F_a\}_a = \{F_1,\cdots,F_m\} , 
$ 
such that $\sum_{a=1}^m F_a^\dagger F_a = \I$. When this measurement is performed on a quantum state $\rho$,
the probability of obtaining outcome $a$ is given by Born rule as follows,
$
p(a|\rho,F_a) = \tr(\rho F^\dagger_aF_a) ,
$
and post-measurement state is,
$
\rho_a=\frac{F_a\rho  F^\dagger_a}{\tr(\rho F^\dagger_aF_a)}.
$

Now, we define distinguishability of quantum measurements. We consider a priori known set of $n$ measurements acting on $d$-dimensional quantum states, each having $m$ outcomes, defined by $\{F_{a|x}\}_{a,x}$, which are sampled from a  equal probability distribution $\{p_x\}_x$, i.e., $p_x > 0,  p_x =1/n,\forall x$. Here $x\in\{1,\cdots,n\}$ denotes the measurements and $a\in\{1,\cdots,m\}$ denotes the outcomes. We define the positive operators, commonly known by POVM elements, as
$
M_{a|x} = F^\dagger_{a|x}F_{a|x} ,$
which is often convenient to use as we do not consider the post-measurement states.  To distinguish these $n$ measurements $\{F_{a|x}\}_{a,x}$, the measurement device is given a known entangled state, and the device carries out one of these $n$ measurements. After performing one of the measurements, the device gives a reduced state with classical outcome. To distinguish $d$-dimensional measurements, it is sufficient to consider qudit-qudit entangled state (Theorem $4$ of \cite{manna_25}).

Distinguishability of the set of quantum measurements with entangled systems finally comes out as the distinguishability of the reduced state of Bob $\rho^B_{a|x}$ with the associated classical probability $p(a|x,\rho) = \tr \left[(M_{a|x}\otimes \I)\rho_{AB} \right] $, where $\rho_{AB}$ is the distinguishing probe. The reduced state of Bob when outcome $a$ is obtained by Alice for measurement $x$, 
$
\rho^B_{a|x} = \tr_A\left[\frac{\left(F_{a|x}\otimes \I\right)\rho_{AB}\left(F^\dagger_{a|x}\otimes \I\right)}{\tr\left( \rho_{AB} \left( M_{a|x} \otimes \I \right) \right)}  \right].
$ 
Depending on Alice's outcome, Bob performs a measurement described by the set of POVM elements $\{N_{b|a}\}$ where $b\in\{1,\cdots,n\}$ is the outcome and $a\in\{1,\cdots,m\}$ is the choice of measurement settings, $\{N_{b|a}\}\geqslant 0, \sum_b N_{b|a}=\I$. For distinguishability, this protocol is successful when Bob's output $b$ will be the same as Alice's input $x$. So, the success probability of this task depends on the joint probability of both Alice's and Bob's measurements. 

From Ref. \cite{manna_25}, it can be shown that the distinguishability of quantum measurements can be expressed as distinguishability of the states $\{\rho_{a|x}^B\}_x$ with weights $\{p\left(a|x,\rho\right\}_x$. Therefore, the following definition arrives.

\begin{definition}[Perfect discrimination]\label{dis_meas}

Let $\{F_{a|x}\}_{a,x}$ be a set of quantum measurements sampled from equal probability distribution.
We denote by
$\mathcal{D}\big[\{F_{a|x}\}_{a,x},\{\frac1n\}_x\big]
$ a protocol that perfectly distinguishes the measurements $\{F_{a|x}\}_{a,x}$, which means there exists $\rho_{AB}$ such that $\mathcal{P}\big[\{\rho_{a|x}^B\}_x,\{p\left(a|x,\rho\right\}_x\big]$ exists for every $a$.
\end{definition}
 If one chooses a product probe, it can be trivially checked that the reduced state on Bob's side does not give any information. Only necessary information can be provided from the classical outcome of Alice. For more details on discrimination of measurements, one can check \cite{manna_25,njp2021,Puchala,eid2026}.

\emph{ Global distinguishability (GD)---}
In this section, we present general protocols for global distinguishability of $v$ number of $w$-partite measurements $\{F_{\vec{a}|i}\}_{i=1}^v =\{\otimes_{k=1}^w F_{a_k|i}^{(k)}\}_{i=1}^v$. In general, the dimension of the measurements can be different for different parties. In this work, we only deal with the measurements which are locally implementable, i.e., the measurements are product.

In the global strategy for distinguishing a set of measurements,
the user can sequentially choose different subsystems of the $w$-partite measurement to discriminate them. Suppose the user first selects an arbitrary subset of parties, $\mathbbm{S}_1 \subseteq [w]$, where $[w] \equiv \{1,\cdots,w\}$, consisting of $w_1$ subsystems with $w_1\leq w$. 
 Thus, without loss of generality, the user prepares an arbitrary composite quantum state $\sigma_{AB}$ on the composite Hilbert space $\mathcal{H}_A \otimes \mathcal{H}_B$, where $\mathcal{H}_A = \bigotimes_{k\in\mathbbm{S}_1} \mathcal{H}_k $ represents the $w_1$-partite subsystem on which the measurements act. The unknown measurement $\otimes_{k\in\mathbbm{S}_1}F^{(k)}_{a_k|i}$ is applied to $A$ subpart of $\sigma_{AB}$ resulting in an reduced state at $B$ side $\sigma_{\vec{a}_{\mathbbm{S}_1}|i}^B= \tr_A\left[\frac{\left(\otimes_{k\in\mathbbm{S}_1}F^{(k)}_{a_k|i}\otimes \I\right)\sigma_{AB}\left(\otimes_{k\in\mathbbm{S}_1}F^{(k)\dagger}_{a_k|i}\otimes \I\right)}{p(\vec{a}_{\mathbbm{S}_1}|i,\sigma)}  \right]$ with classical probability $p(\vec{a}_{\mathbbm{S}_1}|i,\sigma) = \tr \left[(\otimes_{k\in\mathbbm{S}_1} M^{(k)}_{a_k|i}\otimes \I)\sigma_{AB} \right] $. If there exists $\mathcal{P}\big[\{\sigma^B_{\vec{a}_{\mathbbm{S}_1}|i}\}_i,\{p\left(\vec{a}_{\mathbbm{S}_1}|i,\sigma\right\}_i\big]$ for every $\vec{a}_{\mathbbm{S}_1}$, protocol is terminated here. Otherwise, the user employs any arbitrary channel on those measurement to extract some information. Based on this partial information, the user selects another subset $\mathbbm{S}_2 \subseteq [w]\setminus \mathbbm{S}_1$ of size $w_2$, prepares an appropriate probing state, and attempts to distinguish the corresponding $w_2$-partite measurements. At this stage, the final distinguishing measurement can be performed on the joint state obtained from the reduced states associated with $\mathbbm{S}_2$ and the transformed states resulting from the general channel applied to the previous subset. If the measurements are successfully distinguished, the protocol terminates; otherwise, the elimination procedure is continued. This process is repeated until all subsystems on which the multipartite measurement acts have been exhausted. Since there are multiple possible choices of subsets at each stage, each sequence of chosen subsets induces a partition of $[w]$. Furthermore, for a fixed partition, the protocol depends on the order in which the subsets are selected. Therefore, the user considers all possible strategies generated by all permutations of subsets over all partitions and, at each stage, performs a joint measurement to perfectly distinguish the given measurements.


\emph{LOCC distinguishability with restricted strategy (LDR)--- }
In the restricted class of local strategies, each party independently prepares a state $\rho^{(k)}_{A_kB_k}$ on $\mathcal{H}_{A_k} \otimes \mathcal{H}_{B_k}$ and lets the subsystem $B_k$ reduce under the unknown measurement. After a measurement $F^{(k)}_{a_k|i}$ acts on the respective subsystems, the parties got evolved states $\left\{\bigotimes_{k=1}^{w} \rho^{B_k}_{a_k|i} =\bigotimes_{k=1}^{w}\tr_A\left[\frac{\left(F^{(k)}_{a_k|i}\otimes \I\right)\rho^{(k)}_{A_kB_k}\left(F^{(k)\dagger}_{a_k|i}\otimes \I\right)}{p(a_k|i,\rho)}  \right] \right\}_{i=1}^v$ for each $\vec{a}$, corresponding to the given set of measurements $\{F_{\vec{a}|i}\}$ with relevant classical outcomes $p(a_k|i,\rho^{(k)})=\tr \left[( M^{(k)}_{a_k|i}\otimes \I)\rho^{(k)}_{A_kB_k} \right]$. Their tasks, therefore, boil down to find $\mathcal{P}_{\mathrm{LOCC}}\left[\left\{
\bigotimes_{k=1}^{w} \rho^{B_k}_{a_k|i}\right\}_i,\{\prod_{k=1}^w p(a_k|i,\rho^{(k)})\}_i \right]$.

\emph{LOCC distinguishability with adaptive strategy (LDA)--- }
In general, local strategies can be more intricate than the restricted ones described above. In an adaptive LOCC protocol, $k$-th party prepares its probing state $\rho^{(k)}_{A_k B_k}$ based on classical information about the outcome received from the previous party who have already performed measurements on their probed systems. Such a protocol involves many possible sequences in which the parties participate.
Say, the protocol begins with an initial party, say $k$, who prepares a state $\rho^{(k)}_{A_kB_k}$ and applies $ F^{(k)}_{a_k|i}$ from the unknown measurement $ F_{\vec{a}|i}$. On the reduce state $\rho_{a_k|i}^{B_k}$, the party operates a general quantum channel and obtains some information about the unknown measurements. The party conveys the information to another party, say $k'$. Based on this updated information, the $k'$-th party prepares a suitable state $\rho^{(k')}_{A_{k'}B_{k'}}$, applies the respective local measurement, and further executes LOCC distinguishing protocol for the candidates for $F_{\vec{a}|i}$.
This process continues sequentially, with each party reducing the set of possible measurements through local quantum operations and passing the updated information along. Importantly, the choice of the next party depends on both the classical outcome obtained by the previous party and the information received in the earlier rounds.  It is worth noting that for the special case of discriminating between two measurements (\textit{i.e.}, $v=2$), the adaptive strategy provides no advantage over the restricted one. The schematic representation of the circuits for the three types of protocols to distinguish bipartite measurements is provided in FIG. \ref{fig}.  

\begin{figure}
    \centering

    \subfloat[Global distinguishability (GD)]{
    \begin{minipage}{0.48\linewidth}
    \hspace*{-2.5cm}
    \centering
    \begin{tikzpicture}[scale=1,
      decoration={brace,mirror,amplitude=6pt}]

    \draw (0.5,0) -- (1.5,0);
    \draw (0.5,-0.8) -- (5.7,-0.8);
    \draw (0.5,-2.1) -- (3.7,-2.1);
    \draw (0.5,-2.6) -- (5.7,-2.6);
    \draw (2.7,0) -- (5.7,0);
    \draw[double] (1.6,0) -- (6,0);
    \draw[double] (4.5,-2.1) -- (5.7,-2.1);

    \node[left] at (0.5,0) {$k$};
    \node[left] at (0.5,-0.8) {$3$};
    \node[left] at (0.5,-2.1) {$\bar{k}$};
    \node[left] at (0.5,-2.6) {$4$};
    \node at (-0.35,-1.6) {$\sigma_{k3\bar{k}4}$};

    \draw[fill=white] (0.8,-0.35) rectangle (1.8,0.3);
    \node at (1.3,0) {$F_{a_k|i}^{(k)}$};

    \draw[decorate] (0.15,0.2) -- (0.15,-2.8);

    \draw[fill=gray!20,rounded corners]
        (2.2,-0.95) rectangle (3.2,0.4);
    \node at (2.7,-0.4) {$\Lambda$};



    \draw[double, -{Latex[length=2mm]}]
        (2.7,-0.95) -- (2.7,-1.8);
    \node at (2.85,-1.3) {$a$};

    \draw[fill=white] (2.2,-2.8) rectangle (3.2,-1.8);
    \node at (2.7,-2.3) {$V_a$};

    \draw[fill=white] (3.5,-2.4) rectangle (4.5,-1.7);
    \node at (4,-2.05) {$F_{a_{\bar{k}}|i}^{(\bar{k})}$};

     \draw[fill=gray!20,rounded corners]
        (4.9,-2.9) rectangle (6.1,0.3);
    \node at (5.5,-1.4) {$N$};

    \draw (5.0,-0.8) arc[start angle=180,end angle=0,radius=0.5 cm];

    \draw[->, thick] (5.4,-0.8) -- (5.8,-0.5);

    \draw[double, -{Latex[length=2mm]}]
        (6.1,-1.4) -- (6.5,-1.4);
    \node[right] at (6.5,-1.4) {$z=i$};

    \end{tikzpicture}
    \label{figb}
    \end{minipage}
    }

    \vspace{0.3cm}

    \subfloat[LOCC distinguishability with restricted strategy (LDR)]{
    \begin{minipage}{0.48\linewidth}
    \hspace*{-2cm}
    \centering
    \begin{tikzpicture}[scale=1,
      decoration={brace,mirror,amplitude=6pt}]

    \draw (0,0) -- (1.8,0);
    \draw (0,-0.8) -- (2.8,-0.8);
    \draw (0,-1.6) -- (1.8,-1.6);
    \draw (0,-2.4) -- (2.8,-2.4);
    \draw[double] (2,0) -- (2.8,0);
    \draw[double] (2,-1.6) -- (2.8,-1.6);

    \node[left] at (-0.4,0) {$\rho_{13}$};
    \node[left] at (-0.4,-1.7) {$\rho'_{24}$};

    \node[left] at (0,0) {$1$};
    \node[left] at (0,-0.8) {$2$};
    \node[left] at (0,-1.6) {$3$};
    \node[left] at (0,-2.4) {$4$};

    \draw[fill=white] (1.1,-0.35) rectangle (2.1,0.3);
    \node at (1.6,0) {$F_{a_1|i}^{(1)}$};

    \draw[decorate] (-0.35,0.17) -- (-0.35,-0.9);
    \draw[decorate] (-0.35,-1.5) -- (-0.35,-2.5);

    \draw[fill=white] (1.1,-1.95) rectangle (2.1,-1.3);
    \node at (1.6,-1.6) {$F_{a_2|i}^{(2)}$};

    \draw[dashed,->] (2.6,0) -- (2.6,-2.8);
    \draw[dashed,-] (-0.4,-1.15) -- (4,-1.15);

    \node at (2.8,-3) {$\rho_{\vec{a}|i}^{24}$};

    \draw[fill=gray!20,rounded corners]
        (2.8,-2.6) rectangle (5.2,0.2);

    \node at (3.5,-0.3) {LOCC};
    \node at (3.5,-0.8) {Dist.};
    \node at (3.5,-1.2) {of};
    \node at (3.7,-1.6) {$\{\otimes_{k=1}^2 \rho_{a_k|i}^{B_k}$,};
    \node at (4,-2.2) {$\prod_k p(a_k|i,\rho)\}_i$};

    \draw[double, -{Latex[length=2mm]}]
        (5.2,-1.2) -- (5.6,-1.2);
    \node[right] at (5.6,-1.2) {$z=i$};

    \end{tikzpicture}
    \label{figc}
    \end{minipage}
    }

    \hfill

    \subfloat[LOCC distinguishability with adaptive strategy (LDA)]{
    \begin{minipage}{0.48\linewidth}
    \hspace*{-2.2cm}
    \centering
    \begin{tikzpicture}[scale=1,
      decoration={brace,mirror,amplitude=6pt}]

    \draw (0,0) -- (4.7,0);
    \draw (0,-0.8) -- (4.7,-0.8);
    \draw (0,-2.1) -- (4.7,-2.1);
    \draw (0,-2.6) -- (4.7,-2.6);
    \draw[double] (1.3,0) -- (6,0);
    \draw[double] (4,-2.1) -- (6,-2.1);

    \node[left] at (0,0) {$k$};
    \node[left] at (0,-0.8) {$2$};
    \node[left] at (0,-2.1) {$\bar{k}$};
    \node[left] at (0,-2.6) {$4$};

    \node[left] at (-0.45,-0.2) {$\rho_{k2}$};
    \node[left] at (-0.45,-2.1) {$\rho'_{\bar{k}4}$};

    \draw[fill=white] (0.5,-0.35) rectangle (1.5,0.3);
    \node at (1,0) {$F_{a_k|i}^{(k)}$};

    \draw[decorate] (-0.35,0.17) -- (-0.35,-0.9);
    \draw[decorate] (-0.35,-1.9) -- (-0.35,-2.7);

     \draw[fill=gray!20,rounded corners]
        (2,-0.95) rectangle (3,0.4);
    \node at (2.5,-0.4) {$\Lambda$};



    \draw[double, -{Latex[length=2mm]}]
        (2.5,-0.95) -- (2.5,-1.8);
    \node at (2.65,-1.25) {$a$};

    \draw[fill=white] (2,-2.8) rectangle (3,-1.8);
    \node at (2.5,-2.3) {$V_a$};

    \draw[fill=white] (3.1,-2.4) rectangle (4.1,-1.7);
    \node at (3.55,-2.05) {$F_{a_{\bar{k}}|i}^{(\bar{k})}$};

    \draw[fill=gray!20,rounded corners]
        (4.3,-2.9) rectangle (6,0.3);

    \draw[dashed,->] (4.2,0) -- (4.2,-2.8);
    \draw[dashed,->] (1.8,0) -- (1.8,-3);
    \draw[dashed,-] (-0.4,-1.4) -- (4.4,-1.4);

    \node at (4.3,-3.2)
        {$\rho_{\vec{a}|i}^{24}$};
        \node at (1.8,-3.3)
        {$\rho_{a_2|i}^{(2)}$};

    \draw[double, -{Latex[length=2mm]}]
        (6,-1.4) -- (6.3,-1.4);

    \node at (5.1,-0.3) {LOCC};
    \node at (5.1,-0.8) {Dist.};
    \node at (5.1,-1.2) {Protocol};

    \node[right] at (6.2,-1.4) {$z=i$};

    \end{tikzpicture}
    \label{figd}
    \end{minipage}
    }

    \caption{Circuit diagrams illustrating different protocols for
    distinguishing a set of bipartite product measurements
    $F_{a_1|i}^{(1)}\otimes F_{a_2|i}^{(2)}$.
    Each measurement $F_{a_k|i}^{(k)}$ acts on subsystem $k$ of the
    input state. Single lines represent quantum systems, while
    double lines denote classical outcomes obtained from the
    measurements $F_{a_k|i}^{(k)}$ and $\Lambda$, which is a general channel. The outcome $z$ is the
    guess for the measurement index $i$. Except for the measurements
    $F_{a_k|i}^{(k)}$, which are given in advance, all other operations,
    including the input states, intermediate operations, and final
    measurements, can be freely chosen. The label
    ``LOCC Dist.\ of $\left\{
\bigotimes_{k=1}^{2} \rho^{B_k}_{a_k|i}\right\}_i,\{\prod_{k=1}^2 p(a_k|i,\rho^{(k)})\}_i $'' denotes protocol $\mathcal{P}_{\mathrm{LOCC}}\left[\left\{
\bigotimes_{k=1}^{2} \rho^{B_k}_{a_k|i}\right\}_i,\{\prod_{k=1}^2 p(a_k|i,\rho^{(k)})\}_i \right]$. The same thing is applied for the adaptive stretgy too. The variables
    $(k,\bar{k})$ can be either $(1,2)$ or $(2,1)$, corresponding to
    the two possible orderings of the local measurements in the
    adaptive strategy. The horizontal dashed line in panels
    (b) and (c) denotes the spatial separation between the two local
    parties.}
    \label{fig}
\end{figure}
Local restricted strategy can be read as an adaptive strategy in which all parties choose their probing state without implementing the information coming from the previous parties.
\begin{obs}
    For any set of multipartite measurements, the following holds:
    \bea\label{ob1}
    && LDR \leqslant LDA \leqslant GD
    \eea
\end{obs}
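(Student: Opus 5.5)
The inequalities in \eqref{ob1} compare the \emph{sets} of measurement families that are perfectly distinguishable under each class of protocol. Equivalently, they compare the optimal success probabilities. So the plan is to prove two inclusions: every LDR protocol is an LDA protocol, and every LDA protocol can be simulated by a GD protocol. In each case I will take a protocol from the smaller class and show it is literally an instance of the larger class, with the same outcome statistics for every candidate $F_{\vec a|i}$.

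For $LDR\leqslant LDA$, I will take an LDR protocol specified by the local probes $\rho^{(k)}_{A_kB_k}$ and an LOCC protocol for $\{\bigotimes_k\rho^{B_k}_{a_k|i}\}_i$ with weights $\prod_k p(a_k|i,\rho^{(k)})$. I then read it as an adaptive protocol in which every party's probe choice is a constant function of the classical messages received, as noted just before the observation. Each party still prepares $\rho^{(k)}$, lets $F^{(k)}_{a_k|i}$ act, and obtains the same $a_k$ and $\rho^{B_k}_{a_k|i}$. The subsequent LOCC rounds are unchanged, so the joint distribution of the guess $z$ given $i$ is identical.

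For $LDA\leqslant GD$, I will fix an LDA protocol and build a GD protocol that simulates it. The ordering of parties in the LDA protocol is adaptive: it depends on the classical information produced so far. This can be reproduced in GD by choosing the subsets $\mathbbm{S}_1,\mathbbm{S}_2,\dots$ as singletons in the same adaptive order. GD allows the next subset and the next probe to depend on the information extracted in earlier stages, and it allows arbitrary channels on the accumulated systems.

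The main technical point is that GD requires a single joint probing state $\sigma$ per stage, whereas LDA parties prepare probes on the fly. I handle this in one of two ways. One option is to let the global user prepare, at stage $j$, exactly the local probe $\rho^{(k_j)}$ chosen by LDA given the current classical record; this is permitted because GD also prepares ``an appropriate probing state'' after each stage. The alternative is to prepare all candidate probes in advance as a product state and use only the relevant one. Either way, local operations and classical communication are special cases of the arbitrary joint channels and final joint measurement available in GD. Composing them reproduces the LDA statistics exactly.

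Hence perfect LDA discrimination implies perfect GD discrimination, and the same simulation gives the inequality for success probabilities. The only subtlety I expect is bookkeeping: showing that the conditional, branching structure of LDA embeds into the ``all partitions and orderings'' enumeration of GD. I would handle this by induction on the number of stages, carrying the classical record as part of the state passed between stages.
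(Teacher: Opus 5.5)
Your proposal is correct and follows essentially the same reasoning as the paper, which justifies LDR $\leqslant$ LDA by reading a restricted strategy as an adaptive one whose probe choices ignore incoming messages, and treats LDA $\leqslant$ GD as immediate from the definitions (cf.\ Fig.~\ref{fig}: product initial probes, the adaptive unitary $V_a$, and a final LOCC measurement are special cases of the global scheme run with singleton stages). Your stage-by-stage simulation simply makes that containment explicit.
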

\begin{thm}\label{th1}
     For any set of bipartite qubit projective measurements, LDA is equivalent to LDR for perfect distinguishability.
\end{thm}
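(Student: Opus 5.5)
The plan is to show that, for qubit projective measurements, each party has a single probe that is optimal whatever the other party has already learned. Once that holds, the adaptive freedom in LDA is useless, and by Observation~\ref{ob1} (LDR $\leqslant$ LDA) the two notions coincide.

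First, I will reduce the local structure. Each local component $F^{(k)}_{a_k|i}$ is a projective measurement on a qubit. It is therefore either trivial, with all weight on one outcome, or rank-one in an orthonormal basis $\{|u^{(k)}_{0|i}\rangle,|u^{(k)}_{1|i}\rangle\}$. Take a pure probe $|\psi\rangle_{A_kB_k}=(\I\otimes K)|\Phi^+\rangle$. Given outcome $a_k$, Bob's conditional state is the pure state $\propto K|u^{(k)*}_{a_k|i}\rangle$ (or $\propto KK^\dagger$ in the trivial case), and it occurs with probability $\langle u^{(k)}_{a_k|i}|\,\mathrm{Tr}_B\psi\,|u^{(k)}_{a_k|i}\rangle$. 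For perfect discrimination only two things matter: which candidates $i$ have nonzero weight for a given $a_k$ (Definition~\ref{def:locc-dist} ignores zero-weight states), and the orthogonality relations among the surviving conditional states. Mixed probes are convex combinations of pure ones, so restricting to pure probes loses nothing for perfect discrimination.

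Second, I will compare any probe with $|\Phi^+\rangle$ (i.e.\ $K=\I$) as a ``most informative'' probe. The aim is a lemma: for any candidate subset $S$ of measurements, if party $k$ with some probe $\psi$, followed by local operations, can split $S$ into classes that are perfectly separated at that site, then $\Phi^+$ achieves a partition at least as fine. The two-dimensionality is essential here. Two pure qubit states are either orthogonal or not, and being orthogonal is a property of the bases $\{u_{a|i}\}$ alone (up to the action of $K$). I will do a case analysis on $K$. If $K$ is invertible, the support pattern of the outcome weights is the same as for $\Phi^+$. Orthogonality of $K|u^*\rangle$ versus $K|v^*\rangle$ implies that the corresponding rank-one projectors for different $i$ either share a common eigenbasis or give nonorthogonal states, in which case no perfect split occurs. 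If $K$ has rank one, the probe is a product state: Bob's state carries no information, and only the outcome statistics $|\langle u_{a|i}|\phi\rangle|^2$ remain. I will check that every zero pattern obtainable this way (from $|\phi\rangle$ equal to a basis vector of some candidate) is also refined by measuring Bob's half of $\Phi^+$ appropriately.

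Third, given the lemma, I will replace every probe in an arbitrary LDA protocol by $|\Phi^+\rangle$ at each site. By the lemma, each party's information with $\Phi^+$ refines what the adaptive probe would have given, for every possible history. The probes are now history-independent, so the protocol is an LDR protocol: all probes are fixed in advance, and the classical communication and final LOCC discrimination of the resulting product states of Definition~\ref{def:locc-dist} carry over.

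The main obstacle is the lemma in step two. A single probe must be simultaneously optimal for every candidate subset, and with $\Phi^+$ the outcome weights are all strictly positive, so information encoded in vanishing probabilities by a product probe has to be recovered from orthogonality of Bob's states instead. I will handle this first by explicit qubit parametrization of the bases, using the Bloch sphere: being antipodal or identical is the only way to get perfect separation, and this is the structure that fails in higher dimension.
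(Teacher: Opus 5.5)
The core of your route is the paper's: a maximally entangled probe is universal at a qubit site, because two qubit measurements can be perfectly separated only when they are the same basis with swapped labels. The paper gets this from Lemma~\ref{lem:general-unitary-condition} via $r\alpha+s\beta=0$, $-s^*\beta^*+t\alpha^*=0$; your $K$-orthogonality computation is the same fact.

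\textbf{First gap: trivial measurements.} As you set it up, your key lemma is false, because you admit them. Take $T=\{\I,0\}$ and $\{|u\rangle\langle u|,|u^\perp\rangle\langle u^\perp|\}$. The only probes that separate them are $|u^\perp\rangle\otimes|\eta\rangle$. With $\Phi^+$, outcome $1$ leaves Bob with $\I/2$ in one case and $|u^*\rangle$ in the other. Measuring Bob's half of $\Phi^+$ steers to $|u^\perp\rangle$ only half the time; in the $|u\rangle$ branch both give outcome $1$.

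This is not just a proof defect. If such measurements are admitted, the statement itself fails. Consider the set $\sigma^z\otimes T$, $\sigma^z\otimes\sigma^x$, $\overline{\sigma}^z\otimes T$, $\overline{\sigma}^z\otimes\sigma^z$. It is LDA-distinguishable: Alice's probe $|0\rangle$ identifies the pair, and Bob then uses $|-\rangle$ or $|1\rangle$. It is not LDR-distinguishable. The fidelity argument of Appendix~\ref{C} forces $F(\beta_1,\beta_2)=F(\beta_3,\beta_4)=0$. Since the rank-one conditional ancilla state lies in the support of what $T$ leaves, Bob's fixed probe would need both $\langle +|\rho_B|+\rangle=0$ and $\langle 0|\rho_B|0\rangle=0$.

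So you must restrict to nondegenerate two-outcome measurements with a common labelling, as the paper tacitly does. The same failure appears if candidates put their rank-one projectors on different labels. You should also state the fact that makes your product-probe step work. Measuring Bob's half of $\Phi^+$ in $\{|\phi^*\rangle,|\phi^{\perp*}\rangle\}$ steers the probe to $|\phi\rangle$ or $|\phi^\perp\rangle$. The second branch is harmless only because $|\langle u_{a|i}|\phi^\perp\rangle|^2=|\langle u_{\bar a|i}|\phi\rangle|^2$, with a relabelling $a\mapsto\bar a$ independent of $i$.

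\textbf{Second gap: step three does not follow from a partition lemma.} In a bipartite LDA protocol only the second party's probe is adaptive. The first party chooses hers without information in LDR as well, so there is nothing to replace there. What $\psi_\lambda$ must achieve is perfect LOCC distinguishability of $\{\alpha_{i|\lambda}\otimes\beta_i^{\psi_\lambda}\}_{i\in T_\lambda}$ jointly with the first party's residual states, possibly over several rounds. It need not split $T_\lambda$ into classes at its own site, and the first party's message need not be a partition. Nor can $\Phi^+$ be post-processed into a partially entangled probe: an ancilla instrument $\{K_\mu\}$ yields $(\I\otimes K)|\Phi^+\rangle$ in every branch only if $K^\dagger K\propto\I$.

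The missing per-history statement can be supplied as follows.
\begin{enumerate}
\item By multiplicativity of fidelity, each pair needs $F(\alpha_i,\alpha_j)=0$ or $F(\beta_i,\beta_j)=0$.
\item For invertible $K$ with $G=K^\dagger K\not\propto\I$, your $G$-orthogonality computation shows $F(\beta_i,\beta_j)=0$ only for the two labellings of the one basis $\{|u\rangle\}$ whose conjugate diagonalizes $G$.
\item Every candidate $k$ whose Bob measurement lies in another basis then has $\alpha_k$ orthogonal to all other $\alpha_j$. The first party can split it off without disturbing the rest.
\item On the remaining candidates, Bob's states under $\psi_\lambda$ and under $\Phi^+$ (ancilla measured in $\{|u^*\rangle\}$) are classical with the same supports. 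Bob can read them and leave the first party exactly the same residual task.
\item Rank-one $K$ is covered by the steering simulation above.
\end{enumerate}
The paper's proof only treats the case where the second party must separate a pair by itself, so this is precisely the argument you would need to add.
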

 LDA becomes strictly better than LDR when the second party needs different probing state to distinguish a different set of measurements. We prove that different set of distinguishable qubit projective measurements are always distinguishable with a common probe- a maximally entangled probe. Therefore, in the restricted strategy, the second party can always choose any maximally entangled state irrespective of the communication coming from the previous party.  
The details deferred to Appendix \ref{A}.

\section{Results}

For the first study, we directly dive into our main result, i.e., \emph{nonlocality without entanglement}. We provide an example of a set of tripartite measurements which are globally distinguishable but not locally distinguishable. Such a set of measurements acting on $\mathbb{C}^2\otimes\mathbb{C}^2$ are following:
\bea\label{nonlocality}
 G_{ab|1} &=& \sigma^z_a\otimes\sigma^z_b, \quad
  G_{ab|2} = \sigma^x_a\otimes\sigma^x_b,
\eea
where $\{\sigma^z_a\}_{a=1}^2 = \left\{\ket{0}\!\bra{0},\ket{1}\!\bra{1}\right\}$ and $
\{\sigma^x_a\}_{a=1}^2 = \{\ket{+}\!\bra{+},\ket{-}\!\bra{-}\}$.
\begin{thm}[Nonlocality without entanglement]\label{nl_thm}
    The measurements described by \eqref{nonlocality} are globally distinguishable but LOCC indistinguishable.
\end{thm}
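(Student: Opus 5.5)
The plan has two parts. First, I exhibit an explicit global probe that makes Bob's conditional states orthogonal. Second, I show that no pair of local probes can do this. Since only two measurements are involved ($v=2$), the remark after the LDA definition says adaptive LOCC gives no advantage over restricted LOCC. So for the LOCC part it suffices to rule out the restricted strategy (LDR).

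\emph{Global distinguishability.} Take a pure probe $\ket{\psi}=\sum_{j}(\I\otimes C)\ket{\Phi}$ on the two measured qubits plus an ancilla. When the measurement has outcome label $(a,b)$, Bob's conditional (unnormalized) vector is proportional to $C^{T}\ket{e_a e_b}$, where $\ket{e_a e_b}$ is $\ket{ab}$ for $G_{ab|1}$ and $\ket{s_a s_b}$ for $G_{ab|2}$ (with $s_1={+}$, $s_2={-}$). Perfect discrimination with weights then amounts to finding $Q=\bar C C^{T}\geq 0$ such that
\[
\bra{00}Q\ket{++}=\bra{01}Q\ket{+-}=\bra{10}Q\ket{-+}=\bra{11}Q\ket{--}=0 .
\]
I will look for $Q$ diagonal in the Bell basis. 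A short computation shows that $\ket{\Phi^+}$ contributes to the first and last of these elements, $\ket{\Psi^-}$ contributes to the middle two, and $\ket{\Phi^-},\ket{\Psi^+}$ contribute to none of them. So I take $Q=\ket{\Phi^-}\!\bra{\Phi^-}+\ket{\Psi^+}\!\bra{\Psi^+}$, realized for instance by the probe $(\ket{\Phi^-}\ket{0}+\ket{\Psi^+}\ket{1})/\sqrt2$ with a one-qubit ancilla. For every label $(a,b)$, this makes Bob's two conditional states either orthogonal or one of them has zero probability. Hence $\mathcal{P}[\cdot]$ exists for every outcome, and Definition~\ref{dis_meas} is satisfied.

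\emph{LOCC indistinguishability.} In LDR, site $k$ uses its own probe, which I purify so that the conditional states are pure. This gives an operator $Q_k\geq 0$ on one qubit. Outcome label $(a,b)$ then yields two product states, one for each measurement, carrying the weights $p(a|i)p(b|i)$. Two product states are perfectly distinguishable (even globally) only if they are orthogonal. For product states this means some factor is orthogonal. Alternatively, one of the weights may vanish, and this also happens factorwise.

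So for each site $k$ I define $S_k\subseteq\{1,2\}$ as the set of labels $a$ for which site $k$ alone separates the two cases. Here ``separates'' means either $\bra{a}Q_k\ket{s_a}=0$, or $\bra{a}Q_k\ket{a}=0$, or $\bra{s_a}Q_k\ket{s_a}=0$. Perfect discrimination requires that for every $(a,b)$ we have $a\in S_1$ or $b\in S_2$.

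The key single-qubit lemma is that $S_k\neq\{1,2\}$. To prove it, write $Q=\begin{pmatrix}p&r\\ \bar r&s\end{pmatrix}$. The orthogonality conditions for both labels give $p+r=0$ and $\bar r-s=0$, hence $s=-p$. Positivity then forces $Q=0$, which is impossible.

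The zero-probability alternatives need a separate check. For example, $\bra{0}Q\ket{0}=0$ forces $Q\propto\ket1\!\bra1$, and then label $2$ is not separated because $\bra{1}Q\ket{-}\neq0$ and both probabilities are positive. The other zero-probability cases are handled the same way. I expect this case analysis, including the reduction from mixed probes and general local LOCC to the orthogonality criterion, to be the main obstacle in writing everything down cleanly.

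Given the lemma, choose $a\notin S_1$ and $b\notin S_2$. The label $(a,b)$ then occurs with positive weight under both measurements and yields non-orthogonal states. This contradicts perfect LOCC discrimination and completes the argument.
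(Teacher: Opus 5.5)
Your proof is correct and is essentially the paper's argument. It uses an explicit probe for the global part; the paper uses just $|\phi^-\rangle$, and your own Bell-basis computation shows that this already suffices without an ancilla. For the LOCC part it uses the same $2\times 2$ positivity computation as the proof of Theorem~\ref{th1}, applied to an arbitrary single-qubit probe and combined with the factorization of overlaps (multiplicativity of fidelity) for the product conditional states.

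The one step you outsource is the $v=2$ reduction from LDA to LDR, which the paper only asserts. Your lemma closes it directly: it holds for every probe, so a message-dependent probe at the second site changes nothing. You only need to add that, for a label $a\notin S_1$, some branch of the first site's post-processing instrument keeps a nonzero overlap, because the branch overlaps sum to the original nonzero one.
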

\begin{proof}
    For the global protocol, the suitable probe is a product one, which is $\ket{\phi^-}\ket{\eta}\in\mathbbm{C}^4\otimes\mathbbm{C}^4$, where $\ket{\phi^-}=\frac{1}{\sqrt{2}}(\ket{00}-\ket{11})$. If first and fourth outcome occurs, the measurement is $ G_{ab|1}$ and clicking of second and third outcome implies the measurement is $ G_{ab|2}$.

    We now analyze LOCC protocols for distinguishing two measurements. Each party has access to two local measurements. Suppose Alice initiates the protocol. In a single round, Alice can eliminate at most one candidate; eliminating both would not provide any useful information to Bob. After choosing the optimum probe, Alice has to distinguish between two reduced states with associated probability for each outcome. Therefore, a two-outcome measurement suffices for this operation where each outcome of the measurement eliminates one possibility. This reduces the problem to distinguishing the pair of reduced states for the two outcomes of $\sigma^z_a$ and $\sigma^x_a$. Therefore, the problem is nothing but the discrimination of the measurements $\sigma^z_a$ and $\sigma^x_a$. From the proof of Theorem \ref{th1}, we know that only maximally entangled state can act as the probe for perfect discrimination of two qubit projective measurements. For $\sigma^z_a$ and $\sigma^x_a$, there does not exists $\mathcal{D}\big[\{\sigma_a^z,\sigma_a^x\},\{1/2,1/2\}\big]$
 with maximally entangled probe as eventually, there does not exist $\mathcal{P}[\{\ket{0},\ket{+}\},\{1/2,1/2\}]$ and $\mathcal{P}[\{\ket{1},\ket{-}\},\{1/2,1/2\}]$.
    Therefore, Alice cannot perfectly distinguish $\sigma^z_a$ and $\sigma^x_a$. Consequently, Bob must eventually distinguish between the two measurements of his side. He also possess the same two measurements as Alice. henceforth, a similar argument shows that Bob cannot distinguish two measurements. Now we check if any multi-round LOCC protocol can distinguish this set. Suppose, there exists a finite-round LOCC
protocol that perfectly distinguishes $G_{ab|1}$ and $G_{ab|2}$. The entire
post-processing stage, including all local quantum operations, local
ancillas, intermediate measurements, classical messages, and adaptive
choices, defines a completely positive trace-preserving map
$\Lambda_{\mathrm{LOCC}}$. Its final output may be taken to be a
classical register containing the protocol's guess.
We use the fidelity
$
    F(\rho,\sigma)
    :=
    \left\|
        \sqrt{\rho}\sqrt{\sigma}
    \right\|_1^2.
$
The fidelity is multiplicative under tensor products,
$F(\rho_1\otimes\rho_2,\sigma_1\otimes\sigma_2)
    =
    F(\rho_1,\sigma_1)F(\rho_2,\sigma_2)$,
and is monotone under every quantum channel:
$
    F\!\left(
        \Lambda(\rho),\Lambda(\sigma)
    \right)
    \geq
    F(\rho,\sigma).$
 Perfect discrimination requires
the final outputs associated with $G_{ab|1}$ and $G_{ab|2}$ to have disjoint
supports. Hence,
 $   F\!\left(
        \Lambda_{\mathrm{LOCC}}(\omega_1),
        \Lambda_{\mathrm{LOCC}}(\omega_2)
    \right)
    =0,
$ where $\omega_1=\alpha_1\otimes\beta_1,\omega_2=\alpha_2\otimes\beta_2$ are the joint reduced states after the measurements are implemented on the probe.
By monotonicity of the fidelity,
\begin{equation}
\begin{split}
    0
    &=
    F\!\left(
        \Lambda_{\mathrm{LOCC}}(\omega_1),
        \Lambda_{\mathrm{LOCC}}(\omega_2)
    \right)                                                  \geq
    F(\omega_1,\omega_2)
    \geq 0.
\end{split}
\end{equation}
It follows that
$
    F(\omega_1,\omega_2)=0.
$
Consequently, we obtain
$
    F(\omega_1,\omega_2)
    =
    F(\alpha_1\otimes\beta_1,
      \alpha_1\otimes\beta_2)                                 =
    F(\alpha_1,\alpha_2)F(\beta_1,\beta_2)                    =0.
$
At least, we need $ F(\alpha_1,\alpha_2)=0$ or $F(\beta_1,\beta_2)$, which means the reduced states coming from $\sigma^z_a$ and $\sigma_a^x$ needs to be orthogonal, which is not possible.
    
If Bob instead initiates the protocol, the same limitation applies. Since $\sigma^z_b$ and $\sigma^x_b$ are indistinguishable, he cannot eliminate any measurement perfectly in the first step. Consequently, regardless of which party initiates the protocol, the set of unitaries is not perfectly distinguishable under either LDR or LDA strategies. 
\end{proof}

Now our next venture is to compare the two types of local protocols. The relation \ref{ob1} suggests that adaptive strategy can be better than the restricted strategy. But Theorem \ref{th1} also tells that both the strategies are equivalent in the case of qubit measurements as the perfect discrimination concerns.
Let us consider following bipartite projective measurements acting on $\mathbb{C}^3\otimes \mathbb{C}^3$.
\bea\label{adapt}
L_{ab|1} &=& \Omega^z_a\otimes \Omega^z_b,
L_{ab|2} =  \Omega^z_a\otimes \Omega^x_b,\nonumber\\
L_{ab|3} &=&  \Omega^{'z}_a\otimes \Omega^z_b,
L_{ab|4} =  \Omega^{'z}_a\otimes \Omega^{'x}_b,\nonumber\\
L_{ab|5} &=&  \Tilde{\Omega}^z_a\otimes \Omega^z_b,
L_{ab|6} =  \Tilde{\Omega}^z_a\otimes \Tilde{\Omega}^x_b,
\eea
where 
\bea
\{\Omega^z_a\}_{a=1}^3&=&\left\{\ket{0}\!\bra{0},\ket{1}\!\bra{1}, \ket{2}\!\bra{2}\right\},\nonumber\\
\{\Omega^{'z}_a\}_{a=1}^3 &=&\{\ket{2}\!\bra{2},\ket{0}\!\bra{0},\ket{1}\!\bra{1}\},\nonumber\\
\{\Tilde{\Omega}^z_a\}_{a=1}^3 &=&\{\ket{1}\!\bra{1},\ket{2}\!\bra{2},\ket{0}\!\bra{0}\},\nonumber\\  \{\Omega^x_a\}_{a=1}^3 &=&\{\frac12\ket{1-2}\!\bra{1-2},\ket{0}\!\bra{0}, \frac12\ket{1+2}\!\bra{1+2}\},\nonumber\\
\{\Omega^{'x}_a\}_{a=1}^3 &=&\{\frac12\ket{0+2}\!\bra{0+2},\frac12\ket{0-2}\!\bra{0-2}, \ket{1}\!\bra{1}\},\nonumber\\
\{\Tilde{\Omega}^x_a\}_{a=1}^3 &=&\{\ket{2}\!\bra{2},\frac12\ket{0+1}\!\bra{0+1}, \frac12\ket{0-1}\!\bra{0-1}\}.\nonumber\\
\eea
\begin{thm}[Hierarchy between local protocols]\label{th3}
The set of measurements defined at \eqref{adapt} are distinguishable in one-way LDA but indistinguishable in any finite-round LDR.
\end{thm}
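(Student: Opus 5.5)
The plan splits the statement into two parts: a one-way adaptive protocol in which Alice goes first, and a no-go argument for every restricted (non-adaptive) protocol.

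\emph{Distinguishability in one-way LDA.} The six measurements fall into three pairs, $\{1,2\}$, $\{3,4\}$ and $\{5,6\}$. Within each pair Alice's local measurement is the same, namely $\Omega^z$, $\Omega^{'z}$ or $\Tilde{\Omega}^z$ respectively. These three are cyclic relabelings of the computational basis. Alice therefore probes with $\frac{1}{\sqrt3}\sum_{k=0}^2\ket{kk}$. For a fixed outcome $a$, her ancilla collapses to $\ket{0}$, $\ket{2}$ or $\ket{1}$ according to which of the three bases was applied. These states are orthogonal, so Alice identifies the pair perfectly and sends its label to Bob.

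Bob then has to distinguish $\Omega^z$ from $\Omega^x$, $\Omega^{'x}$ or $\Tilde{\Omega}^x$, depending on the label. He uses a pair-dependent probe $\sum_k c_k\ket{kk}$ with one Schmidt coefficient set to zero. For the first pair I take $c_2=0$, i.e.\ $\frac1{\sqrt2}(\ket{00}+\ket{11})$. Then, outcome by outcome, Bob's post-measurement states are:
\begin{itemize}
\item $a=1$: $\ket{0}$ versus $\ket{1}$;
\item $a=2$: $\ket{1}$ versus $\ket{0}$;
\item $a=3$: occurs only under $\Omega^x$.
\end{itemize}
Hence Definition~\ref{dis_meas} is satisfied for this pair. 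The analogous choices for the other two pairs are $c_1=0$ and $c_0=0$ (to be verified the same way). This proves distinguishability in one-way LDA.

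\emph{Indistinguishability in any finite-round LDR.} In LDR Bob's probe $\rho_{B'B}$ is fixed before any communication. Let $\tau$ denote its reduced state on the system entering the measurement. For a rank-one projector $P$, Bob's conditional ancilla state is proportional to $\tau^{1/2}P^{T}\tau^{1/2}$, up to the standard transpose/isometry.

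Now fix Alice's outcome and consider measurements $1$ and $2$. Alice's conditional states $\alpha$ and the weights on her side coincide for these two, so the global conditional states have the form $\alpha\otimes\beta_1$ and $\alpha\otimes\beta_2$. I apply the fidelity argument from the proof of Theorem~\ref{nl_thm}: monotonicity under the LOCC channel together with multiplicativity, $F(\alpha,\alpha)=1$, gives $F(\beta_1,\beta_2)=0$ for every $b$ with nonzero weight. A zero-weight outcome imposes the same algebraic condition trivially. Translating $F(\beta_1,\beta_2)=0$ gives the operator condition
\begin{equation*}
P_b\,\bar\tau\,Q_b=0 \quad \text{for all } b,
\end{equation*}
where $\{P_b\}$ and $\{Q_b\}$ are Bob's two local measurements in the pair. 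The same reasoning applies to the pairs $\{3,4\}$ and $\{5,6\}$. So a single $\tau$ would have to satisfy these conditions simultaneously for $(\Omega^z,\Omega^x)$, $(\Omega^z,\Omega^{'x})$ and $(\Omega^z,\Tilde{\Omega}^x)$.

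\emph{Main obstacle.} The crux is showing that these three sets of conditions force $\tau=0$, which contradicts $\operatorname{Tr}\tau=1$. I would write out the matrix-element conditions, for example $\bra{1}\tau\ket{0}=0$ and $\bra{0}\tau\ket{1-2}=0$ from the first pair. From all three pairs together I expect to kill all off-diagonal entries of $\tau$ and then, via the entries of type $\bra{2}\tau\ket{1+2}=\tau_{22}$-like relations, force each diagonal entry to vanish. This is the step to check carefully, because the zero-weight outcomes make some conditions look automatic. If the full set does not force $\tau=0$, I would instead show that each pair requires a distinct diagonal entry of $\tau$ to vanish, so all three vanish together.
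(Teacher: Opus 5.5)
Your overall structure matches the paper's. Alice's maximally entangled qutrit probe identifies the pair, and Bob then uses a pair-dependent probe. For LDR, multiplicativity and monotonicity of fidelity, together with $\alpha_1=\alpha_2$, $\alpha_3=\alpha_4$, $\alpha_5=\alpha_6$, force one fixed Bob probe to discriminate $(\Omega^z,\Omega^x)$, $(\Omega^z,\Omega^{'x})$ and $(\Omega^z,\tilde{\Omega}^x)$ simultaneously.

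\textbf{Error in the adaptive part.} Your probe choices for the second and third pairs are swapped.
\begin{itemize}
\item Pair $\{3,4\}$ with $c_1=0$ (other coefficients nonzero): outcome $a=1$ occurs under both $\Omega^z$ and $\Omega^{'x}$. Bob's ancilla is left in $|0\rangle$ versus a state proportional to $c_0|0\rangle+c_2|2\rangle$, and these are not orthogonal.
\item Pair $\{5,6\}$ with $c_0=0$: outcome $a=2$ leaves the ancilla in $|1\rangle$ under both $\Omega^z$ and $\tilde{\Omega}^x$.
\end{itemize}
The coefficient that must vanish is the one whose basis vector reappears in the superposition carrying the same outcome label: $|2\rangle$ against $|1\rangle+|2\rangle$, $|0\rangle$ against $|0\rangle+|2\rangle$, and $|1\rangle$ against $|0\rangle+|1\rangle$. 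So the correct choices are $c_2=0$, $c_0=0$, $c_1=0$ for the three pairs respectively. The paper simply uses the product probes $|1\rangle$, $|2\rangle$, $|0\rangle$, which satisfy these conditions.

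\textbf{The LDR part.} Phrasing everything through the reduced state $\tau$ and the condition $\langle p_b|\tau|q_b\rangle=0$ is the content of Lemma~\ref{lem:general-unitary-condition}. Whether you write $\tau$ or $\bar\tau$ does not matter, since all vectors here are real. This is a cleaner version of the paper's computation with amplitudes $\gamma_{gh}$ of a qutrit--qutrit pure probe, and it covers mixed probes and arbitrary ancillas directly.

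The step you left open does close. With $\tau_{jk}=\langle j|\tau|k\rangle$, the outcome-by-outcome conditions are
\begin{gather*}
\tau_{01}=\tau_{02},\quad \tau_{10}=0,\quad \tau_{21}=-\tau_{22}\qquad (\text{pair } \{1,2\}),\\
\tau_{00}=-\tau_{02},\quad \tau_{10}=\tau_{12},\quad \tau_{21}=0\qquad (\text{pair } \{3,4\}),\\
\tau_{02}=0,\quad \tau_{11}=-\tau_{10},\quad \tau_{20}=\tau_{21}\qquad (\text{pair } \{5,6\}).
\end{gather*}
By Hermiticity, $\tau_{10}=\tau_{02}=\tau_{21}=0$ remove all off-diagonal entries. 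Then $\tau_{22}=-\tau_{21}$, $\tau_{00}=-\tau_{02}$ and $\tau_{11}=-\tau_{10}$ give $\tau=0$, contradicting $\operatorname{Tr}\tau=1$. This is the paper's chain of orthogonality equations in Appendix~\ref{C}, written for $\tau$.

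Your fallback would not have worked. Each pair on its own is compatible with a full-rank $\tau$. For $\{1,2\}$, take $\tau_{00}=1/3$, $\tau_{11}=4/9$, $\tau_{22}=2/9$, $\tau_{12}=\tau_{21}=-2/9$, and all other entries zero. So no single pair forces a diagonal entry to vanish, and combining conditions across pairs is essential.
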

 For LDA, Alice chooses the maximally entangled probe and Bob chooses different product probes upon communication from Alice. We proved that there is no common probe by which Bob can distinguish the necessary pairs of measurements. With fidelity monotonicity, we proved the further rounds are useless. We also prove the starter of the protocol is irrelevant to the success of the task. The proof is deferred to Appendix \ref{C}.

We move into another example of four measurements acting on $\mathbb{C}^2\otimes\mathbb{C}^2$. This set of measurements shows the asymmetry of the result in distinguishability with respect to the starter of the protocol. Note that this was not the case for the last result. Let us take the following four measurements:
\bea\label{asym}
H_{ab|1} &=& \sigma^z_a\otimes\sigma^z_b, \ \ 
H_{ab|2} = \overline{\sigma}^z_a\otimes\sigma^x_b,\nonumber\\
H_{ab|3} &=& \sigma^z_a\otimes\overline{\sigma}^z_b, \ \ 
H_{ab|4} = \overline{\sigma}^z_a\otimes\overline{\sigma}^x_b,
\eea
$\overline{\sigma}^z$ and $\overline{\sigma}^x$ are $\sigma^z$ and $\sigma^x$ respectively with reverse outcome order.
\begin{thm}[Asymmetry of starter]\label{th_asym}
    The measurements described by \eqref{asym} are locally distinguishable if Alice starts. If Bob starts, the measurements remain locally indistinguishable.
\end{thm}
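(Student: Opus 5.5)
The plan is to treat the two orderings separately: give an explicit one-way protocol when Alice starts, and prove impossibility when Bob starts. The impossibility rests on a structural observation. Alice's local components coincide within the classes $\{1,3\}$ (both $\sigma^z$) and $\{2,4\}$ (both $\overline{\sigma}^z$). Bob's components differ, but only by relabelling outcomes within $\{1,3\}$ ($\sigma^z$ vs.\ $\overline{\sigma}^z$) and within $\{2,4\}$ ($\sigma^x$ vs.\ $\overline{\sigma}^x$).

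\emph{Alice starts.} Alice uses the product probe $\ket{0}$. Then $\sigma^z_a$ clicks outcome $1$ with certainty and $\overline{\sigma}^z_a$ clicks outcome $2$, so she learns deterministically whether $x\in\{1,3\}$ or $x\in\{2,4\}$, and she communicates this to Bob.
\begin{itemize}[label={}]
\item In the first case Bob must distinguish $\sigma^z_b$ from $\overline{\sigma}^z_b$; the probe $\ket{0}$ again works.
\item In the second case he must distinguish $\sigma^x_b$ from $\overline{\sigma}^x_b$; the probe $\ket{+}$ works.
\end{itemize}
Bob could even use a maximally entangled probe in both cases, since outcome $b$ then leaves orthogonal reduced states $\ket{b}$ vs.\ $\ket{\bar b}$, or $\ket{\pm}$ vs.\ $\ket{\mp}$. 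This settles the first claim, even within LDR.

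\emph{Bob starts.} Since $x=1$ and $x=3$ share the identical Alice component, every probe, channel and classical record on Alice's side has identical statistics under $x=1$ and $x=3$, and likewise under $x=2$ and $x=4$. I would formalize this with the fidelity argument used in the proof of Theorem \ref{nl_thm}. For $x\in\{1,3\}$ the joint post-measurement states factorize as $\alpha_1\otimes\beta_x$, so $F(\omega_1,\omega_3)=F(\beta_1,\beta_3)$, and perfect discrimination forces Bob's conditional states for $1$ vs.\ $3$, and for $2$ vs.\ $4$, to be orthogonal. Moreover, because Bob goes first, he has a single probe $\ket{\psi}=\sum_i\ket{i}\ket{v_i}$ (purify and enlarge the ancilla if needed, unnormalized vectors $v_0,v_1$), fixed before any message from Alice. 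His later processing must then yield a guess that is never ambiguous between $1$ and $3$ nor between $2$ and $4$. Equivalently, every element $E$ of his effective final POVM must annihilate at least one member of each pair. Alice's later rounds cannot break ties inside a pair, by the identity above.

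\emph{The key computation.} For Bob's first outcome, the unnormalized conditional vectors are:
\begin{itemize}[label={}]
\item $x=1$: $v_0$;
\item $x=3$: $v_1$;
\item $x=2$: $(v_0+v_1)/\sqrt2$;
\item $x=4$: $(v_0-v_1)/\sqrt2$.
\end{itemize}
The analogous relabelled list holds for the second outcome. A positive $E$ that kills one vector from $\{v_0,v_1\}$ and one from $\{v_0\pm v_1\}$ kills both $v_0$ and $v_1$, hence the whole span. Summing over $E$ contradicts completeness, $\sum E=\I$, on that span, unless $v_0=v_1=0$, which is impossible for a normalized probe. This holds for every outcome with nonzero weight, so Bob cannot make the required eliminations and the protocol fails.

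The main obstacle is rigorously justifying the reduction of a general multi-round LOCC protocol to this single-POVM condition on Bob's side. I would try to handle it via the fidelity monotonicity and multiplicativity already used in Theorem \ref{nl_thm}, applied separately to the pairs $(1,3)$ and $(2,4)$.
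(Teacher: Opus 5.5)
Your Alice-starts protocol is correct and is the paper's. The gap is the step you flag yourself in the Bob-starts direction: the claim that every element $E$ of Bob's effective final POVM must annihilate one member of each pair $\{1,3\}$ and $\{2,4\}$.

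Fidelity multiplicativity and monotonicity, applied to the two pairs separately, give only $F(\beta_1,\beta_3)=0$ and $F(\beta_2,\beta_4)=0$. That is, Bob's conditional states must be orthogonal within each pair, and two \emph{different} measurements may witness this. A single POVM is forced only if Bob must commit to his measurement before hearing from Alice. In a multi-round protocol he may keep his post-measurement system, and Alice's data, although useless inside a pair, tells him which pair he faces.

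This cannot be repaired. Let Bob probe first with $(|00\rangle+|11\rangle)/\sqrt{2}$, record $b$, keep the ancilla and send a trivial message. Alice probes with $|0\rangle$ and reports whether $x\in\{1,3\}$ or $x\in\{2,4\}$. Bob then measures his ancilla in the $Z$ or $X$ basis accordingly; as you computed yourself, the two candidate states are orthogonal for each $b$. This is your own LDR protocol with Bob's probe applied first, and it discriminates perfectly. So in the general model the Bob-starts impossibility is false, and no reduction to a single-POVM condition exists. This is the model drawn in Fig.~\ref{fig}(c), where the first party's ancilla enters the final LOCC stage, and it is the one your fidelity argument addresses.

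What is true, and what your argument proves, is the one-way version. There, ``Bob starts'' means Bob measures out his entire probe system and sends a classical message before Alice acts, and the guess is computed from that message and Alice's data. Conditioned on any message, Alice's statistics coincide under $1$ and $3$, and under $2$ and $4$, whatever probe she picks adaptively. So each of Bob's outcomes must exclude one member of each pair. Your observation then finishes the proof: a positive $E$ annihilating one of $v_0,v_1$ and one of $v_0\pm v_1$ annihilates both, so completeness forces $v_0=v_1=0$.

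This is exactly the model the paper's proof tacitly uses: Bob eliminates at most one of four candidates, then Alice must finish. Your span argument is a cleaner replacement for its elimination count. The paper gets four distinct qubit states from Theorem~\ref{th5}, which covers only full-Schmidt-rank probes. Your computation handles arbitrary ancillas, rank-deficient probes and general POVMs at once. You should build the one-way restriction into the statement rather than try to derive it from fidelity monotonicity.
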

The proof is deffered to Appendix \ref{B}.

Next, we examine two multipartite qubit product measurements that are perfectly distinguishable by a global strategy but are not perfectly
distinguishable by any finite-round LOCC protocol across any
bipartition, even when the two parties lying on the same side of a bipartition are allowed arbitrary joint quantum operations. This phenomenon is commonly known as 'genuine nonlocality without entanglement.' 

Consider the two $w$-partite qubit product measurements
\bea\label{G0}
& \mathcal{G}_{a_1a_2\cdots a_w|1}
=
\bigotimes_{i=1}^w\sigma^z_{a_i}, \ \ 
\mathcal{G}_{a_1a_2\cdots a_w|2}
=
\bigotimes_{i=1}^w\sigma^u_{a_i} ,
\eea 
where $\sigma^z_{a_i}=\sigma^z_a,\{\sigma^u_{a_i}\}_{a_i}=\left\{
|u\rangle\!\langle u|,
|u^\perp\rangle\!\langle u^\perp| \right\},$
with $|u\rangle =
\cos(\pi/2w)|0\rangle
+ \sin(\pi/2w)|1\rangle.$

\begin{thm}[Strong nonlocality without entanglement]
\label{thm:bipartition-wise-indistinguishability}
The $w$-partite qubit measurement pair $\mathcal{G}_{a_1a_2\cdots a_w|1}$ and $\mathcal{G}_{a_1a_2\cdots a_w|2}$, given by \eqref{G0}, are perfectly
distinguishable globally, whereas they are not perfectly
distinguishable by finite-round LOCC across any of the bipartitions.
\end{thm}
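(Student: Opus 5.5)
The plan has two parts: an explicit entangled probe for the global protocol, and a positivity argument across an arbitrary bipartition $S|\bar S$. The starting point is that $\sigma^u$ is a rotated $\sigma^z$. Let $R=\exp(-i\frac{\pi}{2w}Y)$, with $Y$ the Pauli-$y$ operator. Then $|u\rangle=R|0\rangle$, $|u^\perp\rangle=R|1\rangle$ up to a phase, and $\sigma^u_{a}=R\,\sigma^z_{a}R^\dagger$. Hence $\mathcal G_{\vec a|2}=R^{\otimes w}\mathcal G_{\vec a|1}R^{\dagger\otimes w}$.

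For any probe on $k$ measured qubits plus an ancilla, I would first reduce distinguishability to a linear-algebra condition. Write the probe as $(\sqrt{K}\otimes\I)|\Phi^+\rangle^{\otimes k}$, up to normalisation and an isometry on the ancilla, where $K\geq0$ is essentially the transpose of the reduced state on the measured side. After outcome $\vec a$, the unnormalised ancilla states are $\sqrt{K}^{T}|z_{\vec a}\rangle^*$ and $\sqrt{K}^{T}|u_{\vec a}\rangle^*$. Perfect discrimination via Definition \ref{dis_meas} therefore requires
\[
\langle z_{\vec a}|K|u_{\vec a}\rangle=\langle z_{\vec a}|K R^{\otimes k}|z_{\vec a}\rangle=0
\]
for every $\vec a$ with nonzero probability under both measurements. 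If one of the probabilities vanishes, say $\langle z_{\vec a}|K|z_{\vec a}\rangle=0$, then $K|z_{\vec a}\rangle=0$ and the condition holds trivially. So the condition reads: every diagonal entry of $KR^{\otimes k}$ in the computational basis vanishes.

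\textbf{Global distinguishability.} Take $k=w$ and let $|Y_\pm\rangle=|y_\pm\rangle^{\otimes w}$, where $|y_\pm\rangle$ are the eigenvectors of $Y$. Set $K=|Y_+\rangle\langle Y_+|+|Y_-\rangle\langle Y_-|$, which corresponds to a GHZ state in the $Y$ basis entangled with an ancilla qubit. Since $R^{\otimes w}|Y_\pm\rangle=e^{\mp i\pi/2}|Y_\pm\rangle$, we get
\[
KR^{\otimes w}=-i\big(|Y_+\rangle\langle Y_+|-|Y_-\rangle\langle Y_-|\big).
\]
Its diagonal entries are $-i\big(|\langle z_{\vec a}|Y_+\rangle|^2-|\langle z_{\vec a}|Y_-\rangle|^2\big)=0$, because $|\langle j|y_\pm\rangle|=1/\sqrt2$ for $j=0,1$. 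This shows global perfect distinguishability. I would also check the sufficiency direction, namely that for each $\vec a$ a two-outcome projective measurement on the ancilla separates the two orthogonal states.

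\textbf{Indistinguishability across a bipartition.} Fix $S|\bar S$ with both sides nonempty. As in the proof of Theorem \ref{nl_thm}, with only two candidates adaptivity is irrelevant. The two joint post-probe states are products $\alpha_1\otimes\beta_1$ and $\alpha_2\otimes\beta_2$, and fidelity is multiplicative and monotone under $\Lambda_{\mathrm{LOCC}}$. Perfect discrimination therefore forces $F(\alpha_1,\alpha_2)F(\beta_1,\beta_2)=0$, so one side, say $S$ with $k=|S|<w$ qubits, must on its own perfectly distinguish $\bigotimes_{i\in S}\sigma^z$ from $\bigotimes_{i\in S}\sigma^u$. By the reduction above, summing the vanishing diagonal entries over all $\vec a$ gives $\tr(KR^{\otimes k})=0$, and hence $\tr\!\big(K\,\tfrac{R^{\otimes k}+R^{\dagger\otimes k}}{2}\big)=0$. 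The Hermitian operator $\tfrac{R^{\otimes k}+R^{\dagger\otimes k}}{2}$ has eigenvalues $\cos(\pi m/2w)$ with $|m|\leq k<w$, all strictly positive, so it is positive definite. This contradicts $K\geq0$, $K\neq0$.

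The step I expect to need the most care is the rigorous reduction of a general probe (arbitrary ancilla, mixed states) to the $\sqrt{K}$ form and the diagonal condition. Purification and the transpose trick should handle it. I also need to make sure the fidelity argument legitimately covers classically correlated outcome records $a_S,a_{\bar S}$, by conditioning on each outcome pair.
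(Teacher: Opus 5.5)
Your proposal is correct up to one step that you assert instead of proving (second paragraph). The impossibility half is the paper's argument in different notation. Your diagonal condition $\langle z_{\vec a}|KR^{\otimes k}|z_{\vec a}\rangle=0$ is Lemma~1. Positive-definiteness of $\tfrac12(R^{\otimes k}+R^{\dagger\otimes k})$ for $k<w$ is Lemma~3 with $\theta=\pi/2w$. Fidelity multiplicativity and monotonicity are used the same way.

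The real differences are elsewhere.

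\emph{Global probe.} The paper's probe $|\Psi_w\rangle$ is exactly $\frac{1}{\sqrt2}(|Y_+\rangle+|Y_-\rangle)$, with no ancilla, and the classical outcome alone decides. $\mathcal G_1$ produces only strings with an even number of second outcomes, and $\mathcal G_2$ only odd ones. You instead tag the two branches with an ancilla qubit. Then every outcome occurs under both hypotheses, and you must finish with an outcome-dependent measurement on the ancilla. Your eigenbasis computation makes the mechanism transparent: $R^{\otimes w}$ imprints the relative phase $2w\theta=\pi$ between $|Y_\pm\rangle$. The paper's probe is more economical.

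\emph{Bipartitions.} You handle an arbitrary cut $S|\bar S$ directly, and this is a genuine improvement. The paper reduces to $A_1|A_2\cdots A_w$ by permutation symmetry, but for $w\ge 4$ that reduction never reaches cuts such as $A_1A_2|A_3A_4$. Its Lemma~3 covers them verbatim, since $|S|,|\bar S|<w$.

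The step you should not leave as a citation is ``with only two candidates adaptivity is irrelevant.'' In an adaptive protocol the second side's probe depends on the first side's message $\lambda$. So the post-probe states are not of the fixed form $\alpha_i\otimes\beta_i$, and conditioning on outcome pairs is not the right repair. Condition on $\lambda$ instead. On each branch that occurs under both hypotheses, the states are $\alpha_{i|\lambda}\otimes\beta^{(\lambda)}_i$ with the second probe fixed, so perfect discrimination forces $F(\alpha_{1|\lambda},\alpha_{2|\lambda})\,F(\beta^{(\lambda)}_1,\beta^{(\lambda)}_2)=0$. Your positivity argument shows that no probe on the second side discriminates perfectly, so the second factor is nonzero for every $\lambda$. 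Hence the first factor vanishes on every common branch. Non-common branches already identify the hypothesis, so the first side alone would discriminate perfectly, which the same positivity argument rules out. This branch-wise reasoning is what the paper supplies in its proof of this theorem; with it added, your argument is complete.
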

The proof is provided in the Appendix \ref{D}. Note that Theorem~\ref{nl_thm} is a particular instance of this theorem.

\section{Application to quantum data hiding}

Quantum data hiding encodes classical information so that it can be recovered by global quantum operations but not by spatially separated parties restricted to LOCC. Conventionally, the information is encoded in quantum states, and data hiding has been closely connected to NLWE  \cite{DiVincenzo_ieee}. Here we point out an alternative paradigm based on \emph{quantum-process-based nonlocality without entanglement}, where the hidden information is encoded directly in the identity of a distributed quantum process rather than in a quantum state.

Our results provide a particularly elementary realization using quantum measurements. A single classical bit can be encoded into the choice between the two bipartite qubit product measurements introduced above: the bit can be recovered perfectly when the measurement devices are accessed jointly, whereas perfect recovery is impossible under LOCC. Taking polarization as the qubit degree of freedom, these measurements can be implemented simply using polarizing beam splitters (PBSs) in the $Z$ and $X$ polarization bases. Thus, the information is effectively hidden in the configuration of robust macroscopic elementary optical elements, making the scheme experimentally straightforward.

A complementary realization is possible with unitary processes. Recent results show that there exist pairs of bipartite qubit product unitaries that are perfectly distinguishable globally but not by LOCC \cite{Manna_pra_2026}. Hence, one classical bit can likewise be hidden in the choice of a distributed unitary transformation. For polarization qubits, the relevant single-qubit unitaries can be implemented using elementary wave-plate transformations, such as quarter-wave or half-wave plates. Together, these observations suggest quantum-process-based NLWE as a simple resource for quantum data hiding.

\section{Discussion}
In summary, we have shown that extending nonlocality without entanglement to quantum measurements uncovers qualitatively new phenomena and reveals intrinsic limitations of measurement discrimination under global and LOCC operations. Our results establish several unexpected separations, demonstrating that the different discrimination paradigms are neither equivalent nor universally ordered. The most interesting feature is genuinely nonlocal measurements: one can construct two distinguishable multipartite measurements that remain locally indistinguishable across every bipartition. This is in stark contrast to the analogous scenario for quantum states, where such a phenomenon cannot arise for two multipartite product states. To the best of our knowledge, an analogous phenomenon has not yet been demonstrated for multipartite product-state discrimination, for any number of states.

Several directions remain open. A natural next step is to characterize the optimal success probability for discriminating arbitrary measurement sets, yielding a quantitative characterization of the gap between discrimination paradigms, analogous to that for quantum states. Here we restrict our analysis to demolition measurements, where the post-measurement quantum system is inaccessible. A natural extension is to quantum instruments, for which the post-measurement state provides an additional resource for discrimination \cite{manna_25,eid2026}, potentially revealing new manifestations of nonlocality without entanglement. Our analysis has focused on the single-shot regime. A direction for future work is to explore multishot nonlocality without entanglement, where multiple uses of the unknown measurement are available, and to determine whether local indistinguishability persists or can be activated by collective probing across multiple uses. More broadly, the framework developed here may serve as a foundation for new protocols and applications in quantum information processing that exploit the contrast between global and local measurement discrimination.

\subsection*{Acknowledgment}
DS acknowledges support from STARS (STARS/STARS-2/2023-0809), Govt. of India. The use of generative artificial intelligence tools was limited to language editing.

\bibliography{ref}
\appendix
\onecolumngrid
\section{Proof of Theorem 1}\label{A}
Before going to the proof of the theorem, we present a lemma which will be useful for the proof.
\begin{lemma}[Necessary condition for perfect discrimination]
\label{lem:general-unitary-condition}
Let
$\mathcal{M}_0 = \left\{
|x\rangle\langle x|
\right\}_{x=0}^{d-1}$
and
$\mathcal{M}_1 = \left\{
U|x\rangle\langle x|U^\dagger
\right\}_{x=0}^{d-1}$
be two rank-one projective measurements on $\mathbb{C}^d$, where
$U$ is an arbitrary unitary operator. If $\mathcal{M}_0$ and
$\mathcal{M}_1$ can be perfectly distinguished with a single use,
allowing an arbitrary ancillary system, then there exists a density
operator $\rho$ on the measured subsystem such that
\begin{equation}
    \langle x|\rho U|x\rangle=0
\qquad
\text{for every }x=0,\ldots,d-1.
\end{equation}
Consequently, a necessary condition for perfect discrimination is
\begin{equation}\label{rU=0}
    \operatorname{Tr}(\rho U)=0.
\end{equation}
\end{lemma}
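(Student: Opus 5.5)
The plan is to reduce perfect discrimination to an orthogonality condition on Bob's conditional states for each fixed outcome $x$, and then rewrite that condition in terms of the reduced state of the probe on the measured subsystem. First I would argue that we may take the probe to be pure. If a mixed probe $\sigma_{AB}$ works, so does any purification $|\psi\rangle_{AB'}$ with $B'=BE$: Bob can simply discard $E$. So fix $|\psi\rangle\in\mathbb{C}^d\otimes\mathcal{H}_{B'}$ and expand it in the computational basis of the measured system,
\begin{equation}
|\psi\rangle=\sum_{y=0}^{d-1}|y\rangle|\phi_y\rangle ,
\end{equation}
where the vectors $|\phi_y\rangle$ are unnormalized.

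Next I would write Bob's unnormalized conditional states for a fixed outcome $x$. Under $\mathcal{M}_0$, Bob's state is $|\phi_x\rangle\langle\phi_x|$, i.e.\ $(\langle x|\otimes\I)|\psi\rangle=|\phi_x\rangle$. Under $\mathcal{M}_1$, Bob holds the pure vector
\begin{equation}
|\chi_x\rangle=(\langle x|U^\dagger\otimes\I)|\psi\rangle=\sum_y\langle x|U^\dagger|y\rangle|\phi_y\rangle .
\end{equation}
By Definition~\ref{dis_meas}, perfect discrimination requires that, for each $x$, Bob can perfectly distinguish these two states with the weights $p(x|0)$ and $p(x|1)$. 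A perfect POVM exists only if the two weighted states have orthogonal supports. When one of the weights vanishes, the corresponding vector is zero, and the orthogonality condition holds trivially. In every case we therefore get $\langle\phi_x|\chi_x\rangle=0$ for all $x$, that is,
\begin{equation}
\sum_y \langle x|U^\dagger|y\rangle\langle\phi_x|\phi_y\rangle=0 .
\end{equation}

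Now set $\rho=\operatorname{Tr}_{B'}|\psi\rangle\langle\psi|$. Its matrix elements are $\langle y|\rho|x\rangle=\langle\phi_x|\phi_y\rangle$. The orthogonality condition then reads
\begin{equation}
\sum_y\langle x|U^\dagger|y\rangle\langle y|\rho|x\rangle=\langle x|U^\dagger\rho|x\rangle=0 .
\end{equation}
Taking the complex conjugate and using $\rho=\rho^\dagger$ gives $\langle x|\rho U|x\rangle=0$ for every $x$. Summing over $x$ yields $\operatorname{Tr}(\rho U)=0$, which is \eqref{rU=0}.

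I expect the only delicate point to be the justification that perfect discrimination forces outcome-wise orthogonality. This requires two observations. First, Bob knows the classical outcome $x$, so the relevant discrimination problem is between the two conditional states sharing the same label $x$. Second, zero-weight branches impose no constraint, but they also do not break the identity above, because the corresponding vectors vanish. The purification step and the index bookkeeping are routine.
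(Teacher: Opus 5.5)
Your proposal is correct and follows essentially the same route as the paper. Both arguments purify the probe and demand, for each outcome $x$, orthogonality of the ancilla vectors $(\langle x|\otimes\I)|\psi\rangle$ and $(\langle x|U^\dagger\otimes\I)|\psi\rangle$, then rewrite this overlap as $\langle x|\rho U|x\rangle$. The paper does the rewriting directly as $\langle\psi|(U|x\rangle\langle x|\otimes\I)|\psi\rangle$, whereas you expand in the computational basis and take a complex conjugate; your explicit handling of zero-weight outcomes also tidies a point the paper only mentions in passing.
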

\begin{proof}
Let $|\psi\rangle_{SR}$ be an arbitrary pure probe, where $S$ is the
$d$-dimensional system on which the unknown measurement acts and $R$ is an arbitrary ancillary system. It is sufficient to consider pure
probes, since every mixed probe can be purified by enlarging the
ancillary system. Let
$\rho = \operatorname{Tr}_{R}
\left(|\psi\rangle\langle\psi|
\right)$ be the reduced state on the measured system $S$.

Suppose that the classical outcome of the unknown measurement is
$x$. If the measurement is $\mathcal{M}_0$, the corresponding
unnormalized conditional state vector on the ancillary system is
\begin{equation}
|\alpha_x\rangle_R
=
(\langle x|\otimes I_R)|\psi\rangle_{SR}.
\end{equation}
If the measurement is $\mathcal{M}_1$, the corresponding conditional
state vector is
\begin{equation}
|\beta_x\rangle_R
=
(\langle x|U^\dagger\otimes I_R)|\psi\rangle_{SR}.
\end{equation}
For perfect discrimination, whenever the same classical outcome $x$
can arise from both measurements, the corresponding conditional
states must be orthogonal. Therefore,
$\langle\beta_x|\alpha_x\rangle=0,\ \forall\,x.$
Using the definition of $\rho$, we obtain
\begin{equation}
\langle\beta_x|\alpha_x\rangle
=
\langle\psi|
\left(
U|x\rangle\langle x|\otimes I_R
\right)
|\psi\rangle
=
\langle x|\rho U|x\rangle.
\end{equation}
Hence perfect discrimination requires $\langle x|\rho U|x\rangle=0, \forall\,x.$
Summing over the complete computational basis gives \eqref{rU=0}. This completes the proof.
\end{proof}

\emph{Proof of Theorem \ref{th1}: }
One-way LDA becomes strictly better than LDR when the last party needs different probing state to distinguish a different set of measurements. Let us take two qubit projective measurements such that $F_{1|1}=\ket{0}\bra{0}$ and $F_{1|2}=\ket{\eta}\bra{\eta}$.

Let the reduced density operator of the probe on the measured qubit be
\begin{equation}
    \rho=
    \begin{pmatrix}
        r & s\\
        s^{*} & t
    \end{pmatrix},
\end{equation}
where $\rho>0, \tr(\rho)=1,$ corresponds to a full-Schmidt-rank entangled probe. Let
$|\eta\rangle
    = \alpha|0\rangle+\beta|1\rangle,
    \
    |\eta^{\perp}\rangle
    =
    -\beta^{*}|0\rangle+\alpha^{*}|1\rangle,$
with $|\alpha|^{2}+|\beta|^{2}=1$.
For perfect discrimination, the conditional states corresponding to each common outcome must be mutually orthogonal. Using Lemma~\ref{lem:general-unitary-condition}, we know that for perfect distinguishability, 
$\langle 0|\rho|\eta\rangle=0,
    \langle 1|\rho|\eta^{\perp}\rangle=0.$
Using the above parametrization, we get
\begin{align}
    r\alpha+s\beta &=0,
    \label{eq:first_R_condition}\\
    -s^{*}\beta^{*}+t\alpha^{*}&=0.
    \label{eq:second_R_condition}
\end{align}
Suppose that $\alpha\neq0$ and $\beta\neq0$. From
Eq.~\eqref{eq:first_R_condition}, we obtain $s=-\frac{r\alpha}{\beta},$ and hence
$s^{*}=-\frac{r\alpha^{*}}{\beta^{*}},$ where we have used the fact that $r$ is real. Substituting this into
Eq.~\eqref{eq:second_R_condition} yields $r\alpha^{*}+t\alpha^{*}=0.$
Since $\alpha\neq0$, this implies
$r+t=0,$ which is incompatible with $\tr(\rho)=1$.

Thus, for a full-Schmidt-rank probe one must have $\alpha=0,$ which implies $|\eta\rangle=e^{i\theta}|1\rangle$ for some irrelevant phase $\theta$. Therefore, a full-Schmidt-rank probe can perfectly discriminate the
two binary qubit projective measurements only when the corresponding
projectors associated with the same outcome are orthogonal. In this case, a maximally entangled probe or product system also achieves perfect discrimination. Theorem $5$ of ref. \cite{manna_25} tells that all the maximally entangled states are equivalent for
evaluation of distinguishability of any two general qubit rank-one measurements. So any two distinguishable qubit projective measurements are always distinguishable with any maximally entangled probe.

\section{Proof of Theorem \ref{th3}}\label{C}
For LDA, Alice starts the protocol. She chooses the maximally entangled state as the probe and measure the reduced state in $\{\ket{0},\ket{1},\ket{2}\}$ basis. If first outcome occurs, Alice knows it is either first or second measurement. Upon communication, Bob chooses the probe $\ket{1}\ket{s}$. If second outcome occurs, Bob knows it is the first measurement and for other two outcomes, he knows it is second measurement.

If Alice got second outcome, Bob chooses the probe $\ket{2}\ket{s}$ and distinguish the measurements $L_{ab|3}$ and $L_{ab|4}$. For the third outcome of Alice, Bob will select $\ket{0}\ket{s}$ as the probe and successfully discriminate the fifth and sixth measurement. $\ket{s}$ is arbitrary.


Let Alice and Bob prepare arbitrary local probe states
$\rho_{AA'}$ and $\rho_{BB'}$, respectively, where $A'$ and $B'$
denote arbitrary local ancillary systems. In an LDR protocol, these
probe states are chosen independently and before any classical
communication takes place. In particular, the joint probe is of the
product form
\begin{equation}
    \rho_{AA'}\otimes\rho_{BB'}.
\end{equation}

After the unknown bipartite measurement $L_i$ is applied, let
$\alpha_i$ and $\beta_i$ denote the resulting local
classical--quantum states held by Alice and Bob, respectively. The
classical registers containing the local measurement outcomes are
included in $\alpha_i$ and $\beta_i$. Since every $L_{ab|i}$ is a product
measurement and the initial probe is a product across the
Alice--Bob partition, the joint state prior to the subsequent LOCC
protocol has the form
\begin{equation}
    \omega_i=\alpha_i\otimes\beta_i.
    \label{eq:ldr-product-output}
\end{equation}

The six measurements are grouped into the three pairs
\begin{equation}
    (L_{ab|1},L_{ab|2}),\qquad
    (L_{ab|3},L_{ab|4}),\qquad
    (L_{ab|5},L_{ab|6}).
\end{equation}
Within each pair, Alice's local measurement is identical. Consequently,
her complete local classical--quantum state is also identical for the
two measurements in each pair:
\begin{equation}
    \alpha_1=\alpha_2,\qquad
    \alpha_3=\alpha_4,\qquad
    \alpha_5=\alpha_6.
    \label{eq:alice-identical-states}
\end{equation}

Suppose, toward a contradiction, that there exists a finite-round LOCC
protocol that perfectly distinguishes the six measurements. The entire
post-processing stage, including all local quantum operations, local
ancillas, intermediate measurements, classical messages, and adaptive
choices, defines a completely positive trace-preserving map
$\Lambda_{\mathrm{LOCC}}$. Its final output may be taken to be a
classical register containing the protocol's guess.


Consider first the pair $(L_{ab|1},L_{ab|2})$. Perfect discrimination requires
the final outputs associated with $L_{ab|1}$ and $L_{ab|2}$ to have disjoint
supports. Hence,
\begin{equation}
    F\!\left(
        \Lambda_{\mathrm{LOCC}}(\omega_1),
        \Lambda_{\mathrm{LOCC}}(\omega_2)
    \right)
    =0,
    \label{eq:final-fidelity-zero-12}
\end{equation}
where $F$ is the fidelity, which is defined in the proof of Theorem \ref{nl_thm}. 
By monotonicity of the fidelity,
we can write
\begin{equation}
    F(\omega_1,\omega_2)=0,
    \label{eq:input-fidelity-zero-12}
\end{equation}
which turns into
\begin{equation}
\begin{split}
    F(\omega_1,\omega_2)
    &=
    F(\alpha_1\otimes\beta_1,
      \alpha_1\otimes\beta_2)                                 =
    F(\alpha_1,\alpha_1)F(\beta_1,\beta_2)                    =
    F(\beta_1,\beta_2).
\end{split}
\end{equation}
Therefore,
\begin{equation}
    F(\beta_1,\beta_2)=0.
    \label{eq:bob-orthogonal-12}
\end{equation}
Thus Bob's local classical--quantum states corresponding to $L_{ab|1}$ and
$L_{ab|2}$ must already have orthogonal supports. Equivalently, Bob's fixed
local probe must perfectly distinguish the two local measurements
appearing in the pair $(L_{ab|1},L_{ab|2})$.

The same argument applies to the other two pairs. Since
$\alpha_3=\alpha_4$ and $\alpha_5=\alpha_6$, perfect discrimination
also implies
\begin{equation}
    F(\beta_3,\beta_4)=0,
    \qquad
    F(\beta_5,\beta_6)=0.
    \label{eq:bob-orthogonal-all-pairs}
\end{equation}
Consequently, the same fixed Bob probe must simultaneously perfectly
distinguish the three local measurement pairs associated with
\begin{equation}
    (L_{ab|1},L_{ab|2}),\qquad
    (L_{ab|3},L_{ab|4}),\qquad
    (L_{ab|5},L_{ab|6}).
    \label{eq:three-required-pairs}
\end{equation}
However, we will show next that no single Bob probe state can
simultaneously perfectly distinguish all three pairs in
Eq.~\eqref{eq:three-required-pairs}. This contradicts
Eqs.~\eqref{eq:bob-orthogonal-12} and
\eqref{eq:bob-orthogonal-all-pairs}. Therefore, no finite-round LDR
protocol can perfectly distinguish the six measurements.

The argument does not depend on which party performs the first LOCC
operation. If Alice starts, all of her operations and messages are
included in the channel $\Lambda_{\mathrm{LOCC}}$, and fidelity
monotonicity shows that her identical local states cannot make
nonorthogonal Bob states perfectly distinguishable. If Bob starts, the
same reasoning applies: the full Bob-first adaptive protocol is again
a quantum channel $\Lambda_{\mathrm{LOCC}}$, while Alice's input state
remains identical within each of the three pairs. Hence Bob's fixed
probe must again distinguish all three local pairs simultaneously.

Now we will prove that no single Bob probe state can
simultaneously perfectly distinguish all three pairs if Alice begins. Alice's best strategy is the previous one. In restricted strategy, Bob choose the probe without implementing the information of Alice. The most general probe would be a qutrit-qutrit entangled state $\ket{\phi_p}=\sum_{g,h=0}^2\gamma_{gh}\ket{gh}$.
Firstly we are considering the perfect distinguishability of first two measurements of Bob which are $\{\Omega^z_a\}_a$ and $\{\Omega^x_a\}_a$. For perfect distinguishability, 
$\mathcal{D}\left[\left\{\Omega_a^z,\Omega_a^x\right\}_a,\left\{1/2,1/2\right\}\right]$ must exist. Consequently, the following quantities exist.  $\mathcal{P}\left[\left\{ \frac{\gamma_{00}\ket{0}+\gamma_{01}\ket{1}+\gamma_{02}\ket{2}}{\sqrt{p_1}},\frac{(\gamma_{10}-\gamma_{20}) \ket{0}+(\gamma_{11}-\gamma_{21})\ket{1}+(\gamma_{12}-\gamma_{22})\ket{2}}{\sqrt{p_2}}\right\},\left\{p_1,p_2/2\right\}\right]$, $ \mathcal{P} \left[\left \{ \frac{\gamma_{10}\ket{0}+\gamma_{11}\ket{1}+\gamma_{12}\ket{2}}{\sqrt{p_3}},\frac{\gamma_{00}\ket{0}+\gamma_{01}\ket{1}+\gamma_{02}\ket{2}}{\sqrt{p_4}}\right\},\left\{p_3,p_4\right\}\right]$ and $   \mathcal{P}\Big[\left\{ \frac{\gamma_{20}\ket{0}+\gamma_{21}\ket{1}+\gamma_{22}\ket{2}}{\sqrt{p_5}},\frac{(\gamma_{10}+\gamma_{20}) \ket{0}+(\gamma_{11}+\gamma_{21})\ket{1}+(\gamma_{12}+\gamma_{22})\ket{2}}{\sqrt{p_6}}\right\},$ $\left\{p_5,p_6/2\right\}\Big],$
where $p_1=p_4=|\gamma_{00}|^2+|\gamma_{01}|^2+|\gamma_{02}|^2, p_2 = |\gamma_{10}-\gamma_{20}|^2+|\gamma_{11}-\gamma_{21}|^2+|\gamma_{12}-\gamma_{22}|^2, p_3=|\gamma_{10}|^2+|\gamma_{11}|^2+|\gamma_{12}|^2, p_5=|\gamma_{20}|^2+|\gamma_{21}|^2+|\gamma_{22}|^2$ and $p_6=|\gamma_{10}+\gamma_{20}|^2+|\gamma_{11}+\gamma_{21}|^2+|\gamma_{12}+\gamma_{22}|^2$.
For $\mathcal{D}\big[\{\Omega^z_a,\Omega_a^x\}_{a,x},\{1/2,1/2\}\big]$
, all three sets of reduced states should be distinguishable, i.e., orthogonal. Using this fact, we get the following three equations.
\bea
&& \gamma_{00}^*(\gamma_{10}-\gamma_{20})+\gamma_{01}^*(\gamma_{11}-\gamma_{21})+\gamma_{02}^*(\gamma_{12}-\gamma_{22})=0;\label{D3}\\
&& \gamma_{00}^*\gamma_{10}+\gamma_{01}^*\gamma_{11}+\gamma_{02}^*\gamma_{12}=0;\label{D4}\\
&& \gamma_{20}^*(\gamma_{10}+\gamma_{20})+\gamma_{21}^*(\gamma_{11}+\gamma_{21})+\gamma_{22}^*(\gamma_{12}+\gamma_{22})=0;\label{D5}
\eea
From \eqref{D3} and \eqref{D4}, we arrive at,
\be\label{gamma1}
\gamma_{00}^*\gamma_{20}+\gamma_{01}^*\gamma_{21}+\gamma_{02}^*\gamma_{22}=0;
\ee
 As this is the restricted strategy, $\ket{\phi_p}$ should perfectly discriminate the third and fourth measurement of Bob, i.e., $\{\Omega^z_a\}_a$ and $\{\Omega_a^{'x}\}$. Similarly, we check
$\mathcal{D}_{\ket{\phi_p}}\left[\left\{\Omega_a^z,\Omega_a^{'x}\right\}_a,\left\{1/2,1/2\right\}\right]$ with probe $\ket{\phi_p}$. That means $\mathcal{P}\left[\left\{ \frac{\gamma_{00}\ket{0}+\gamma_{01}\ket{1}+\gamma_{02}\ket{2}}{\sqrt{p_1}},\frac{(\gamma_{00}+\gamma_{20}) \ket{0}+(\gamma_{01}+\gamma_{21})\ket{1}+(\gamma_{02}+\gamma_{22})\ket{2}}{\sqrt{l_1}}\right\},\left\{p_1,l_1/2\right\}\right]$, $ \mathcal{P}\left[\left\{ \frac{\gamma_{10}\ket{0}+\gamma_{11}\ket{1}+\gamma_{12}\ket{2}}{\sqrt{p_3}},\frac{(\gamma_{00}-\gamma_{20}) \ket{0}+(\gamma_{01}-\gamma_{21})\ket{1}+(\gamma_{02}-\gamma_{22})\ket{2}}{\sqrt{l_2}}\right\},\left\{p_3,l_2/2\right\}\right] $ and $  \mathcal{P}\Big[\left\{ \frac{\gamma_{20}\ket{0}+\gamma_{21}\ket{1}+\gamma_{22}\ket{2}}{\sqrt{p_5}},\frac{\gamma_{10}\ket{0}+\gamma_{11}\ket{1}+\gamma_{12}\ket{2}}{\sqrt{p_3}}\right\},$ $\left\{p_5,p_3\right\}\Big]  \Big)$ exist,
where $l_1=|\gamma_{00}+\gamma_{20}|^2+|\gamma_{01}+\gamma_{21}|^2+|\gamma_{02}+\gamma_{22}|^2$ and $l_2=|\gamma_{00}-\gamma_{20}|^2+|\gamma_{01}-\gamma_{21}|^2+|\gamma_{02}-\gamma_{22}|^2$.
From the orthogonality condition of the set of reduced states, we get 
\bea
&& \gamma_{00}^*(\gamma_{00}+\gamma_{20})+\gamma_{01}^*(\gamma_{01}+\gamma_{21})+\gamma_{02}^*(\gamma_{02}+\gamma_{22})=0;\label{D8}\\
&& \gamma_{10}^*(\gamma_{00}-\gamma_{20})+\gamma_{11}^*(\gamma_{01}-\gamma_{21})+\gamma_{12}^*(\gamma_{02}-\gamma_{22})=0;\label{D9}\\
&&\gamma_{20}^*\gamma_{10}+\gamma_{21}^*\gamma_{11}+\gamma_{22}^*\gamma_{12}=0;\label{D10}
\eea
Introducing \eqref{gamma1} into \eqref{D8}, we find that
\be
|\gamma_{00}|^2+|\gamma_{01}|^2+|\gamma_{02}|^2=0,
\ee
which implies 
\be\label{De12}
\gamma_{00}=\gamma_{01}=\gamma_{02}=0.
\ee
Now for the last two measurements, we want to check the existence of
 $\mathcal{D}_{\ket{\phi_p}}\left[\left\{\Omega_a^z,\tilde{\Omega}_a^{x}\right\}_a,\left\{1/2,1/2\right\}\right]$, which implies the existence of
 $\mathcal{P}\left[\left\{ \frac{\gamma_{00}\ket{0}+\gamma_{01}\ket{1}+\gamma_{02}\ket{2}}{\sqrt{p_1}},\frac{\gamma_{20} \ket{0}+\gamma_{21}\ket{1}+\gamma_{22}\ket{2}}{\sqrt{l_1}}\right\},\left\{p_1,l_1\right\}\right],$ $ \mathcal{P}\left[\left\{ \frac{\gamma_{10}\ket{0}+\gamma_{11}\ket{1}+\gamma_{12}\ket{2}}{\sqrt{p_3}},\frac{(\gamma_{00}+\gamma_{10}) \ket{0}+(\gamma_{01}+\gamma_{11})\ket{1}+(\gamma_{02}+\gamma_{12})\ket{2}}{\sqrt{k_1}}\right\},\left\{p_3,k_1/2\right\}\right] $ and \newline$ \mathcal{P}\Big[\Big\{ \frac{\gamma_{20}\ket{0} +\gamma_{21}\ket{1}+ \gamma_{22}\ket{2}}{\sqrt{p_5}}$,
 $\frac{(\gamma_{00}-\gamma_{10}) \ket{0}+ (\gamma_{01}-\gamma_{11})\ket{1}+ (\gamma_{02}-\gamma_{12})\ket{2}}{\sqrt{k_2}}\Big\},\left\{p_5,k_2/2\right\}\Big],$
where $k_1=|\gamma_{00}+\gamma_{10}|^2+|\gamma_{01}+\gamma_{11}|^2+|\gamma_{02}+\gamma_{12}|^2$ and $k_2=|\gamma_{00}-\gamma_{10}|^2+|\gamma_{01}-\gamma_{11}|^2+|\gamma_{02}-\gamma_{12}|^2$.
From the orthogonality condition of the set of states, we get 
\bea
&&\gamma_{20}^*\gamma_{00}+\gamma_{21}^*\gamma_{01}+\gamma_{22}^*\gamma_{02}=0;\label{D13}\\
&& \gamma_{10}^*(\gamma_{00}+\gamma_{10})+\gamma_{11}^*(\gamma_{01}+\gamma_{11})+\gamma_{12}^*(\gamma_{02}+\gamma_{12})=0;\label{D14}\\
&& \gamma_{20}^*(\gamma_{00}-\gamma_{10})+\gamma_{21}^*(\gamma_{01}-\gamma_{11})+\gamma_{22}^*(\gamma_{02}-\gamma_{12})=0;\label{D15}
\eea
From \eqref{D14} and \eqref{De12}, we get,
\be\label{De17}
\gamma_{10}=\gamma_{11}=\gamma_{12}=0.
\ee
Implementing \eqref{De17} into \eqref{D5}, we finally get
\be\label{De16}
\gamma_{20}=\gamma_{21}=\gamma_{22}=0.
\ee
From \eqref{De12}, \eqref{De17} and \eqref{De16}, we can infer that there does not exist any probe which can distinguish all three sets simultaneously.

\section{Proof of Theorem 4}\label{B} 
\begin{lemma}
Let
$|\psi\rangle_{AB}=\sum_{i=1}^{d}s_i\,|\chi_i\rangle_A\otimes|\zeta_i\rangle_B$
be the Schmidt decomposition of a bipartite pure state, where
\(s_i\ge 0\) and \(\sum_i s_i^2=1\). Define
$
\sigma(H)=\operatorname{Tr}_A\!\left[(H\otimes I)\,|\psi\rangle\langle\psi|\right],
$
where \(H\) is an operator acting on \(\mathcal{H}_A\). Then
$
\sigma(H)=S H^{T} S,
$
where
$
S=\operatorname{diag}(s_1,\ldots,s_d),
$
and \(H^T\) denotes the transpose of \(H\) in the Schmidt basis
\(\{|\chi_i\rangle\}\).
\end{lemma}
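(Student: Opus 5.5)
The plan is a direct computation in the Schmidt bases. I will expand the projector and compute the partial trace on $A$ term by term. Two conventions need fixing first. The transpose $H^T$ is taken with respect to $\{|\chi_i\rangle\}$. The output $\sigma(H)$, an operator on $\mathcal{H}_B$, will be written as a matrix in the basis $\{|\zeta_j\rangle\}$, so that the claimed identity $\sigma(H)=SH^TS$ is an equality of $d\times d$ matrices in these two bases. If the Schmidt rank is smaller than $\dim\mathcal{H}_A$, I will either pad with zero Schmidt coefficients, completing the bases, or restrict to the supports. Both choices leave the formula unchanged, since the extra rows and columns of $S$ vanish.

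The first step is to write
\begin{equation}
(H\otimes I)|\psi\rangle\langle\psi|=\sum_{i,j}s_is_j\,H|\chi_i\rangle\langle\chi_j|\otimes|\zeta_i\rangle\langle\zeta_j| .
\end{equation}
The second step is to take the partial trace over $A$ using $\operatorname{Tr}(H|\chi_i\rangle\langle\chi_j|)=\langle\chi_j|H|\chi_i\rangle=H_{ji}=(H^T)_{ij}$. This gives
\begin{equation}
\sigma(H)=\sum_{i,j}s_i (H^T)_{ij} s_j\,|\zeta_i\rangle\langle\zeta_j| .
\end{equation}
This is exactly the matrix $SH^TS$ in the $\{|\zeta_i\rangle\}$ basis, because $(SH^TS)_{ij}=s_i(H^T)_{ij}s_j$.

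There is no real obstacle here. The only point needing care is bookkeeping: the index order of $H$ must produce the transpose and not $H$ itself or its conjugate. This follows from the linearity of $H\otimes I$ acting on the ket side only, so no complex conjugation enters, while the bra index $j$ and ket index $i$ swap under the trace. I will also note, for later use with the measurement operators $F_{a|x}$, that applying the lemma with $H=F^\dagger F$ gives Bob's unnormalized conditional state as $S (F^\dagger F)^T S$.
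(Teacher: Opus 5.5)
Your proposal is correct and is essentially the paper's own proof. Both expand $|\psi\rangle\langle\psi|$ in the Schmidt bases, apply $H\otimes I$ on the ket side, use $\operatorname{Tr}(H|\chi_i\rangle\langle\chi_j|)=\langle\chi_j|H|\chi_i\rangle=(H^T)_{ij}$, and read off the matrix elements $s_i(H^T)_{ij}s_j$ in the $\{|\zeta_i\rangle\}$ basis. Your extra remarks on padding with zero Schmidt coefficients, and on taking $H=F^\dagger F$ (valid by cyclicity of the partial trace over $A$), are harmless but not needed for the lemma.
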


\begin{proof}
Using the Schmidt decomposition,
$
|\psi\rangle\langle\psi|
=
\sum_{i,j=1}^{d}
s_i s_j
\,|\chi_i\rangle\langle\chi_j|
\otimes
|\zeta_i\rangle\langle\zeta_j|.
$
Hence,
$
(H\otimes I)|\psi\rangle\langle\psi|
=
\sum_{i,j=1}^{d}
s_i s_j
\,H|\chi_i\rangle\langle\chi_j|
\otimes
|\zeta_i\rangle\langle\zeta_j|.
$

Taking the partial trace over subsystem \(A\),
\begin{align*}
\sigma(H)
&=
\operatorname{Tr}_A\!\left[(H\otimes I)|\psi\rangle\langle\psi|\right] 
=
\sum_{i,j=1}^{d}
s_i s_j
\operatorname{Tr}\!\left(H|\chi_i\rangle\langle\chi_j|\right)
|\zeta_i\rangle\langle\zeta_j|.
\end{align*}

Since
$
\operatorname{Tr}\!\left(H|\chi_i\rangle\langle\chi_j|\right)
=
\langle\chi_j|H|\chi_i\rangle,
$
we obtain
\[
\sigma(H)
=
\sum_{i,j=1}^{d}
s_i s_j
\langle\chi_j|H|\chi_i\rangle
|\zeta_i\rangle\langle\zeta_j|.
\]

Therefore, the matrix elements of \(\sigma(H)\) in the Schmidt basis
\(\{|\zeta_i\rangle\}\) are
\[
\langle\zeta_i|\sigma(H)|\zeta_j\rangle
=
s_i s_j
\langle\chi_j|H|\chi_i\rangle.
\]
Since the transpose of \(H\) in the Schmidt basis satisfies
$
(H^T)_{ij}
=
\langle\chi_j|H|\chi_i\rangle,
$
it follows that
$
\sigma(H)_{ij}
=
s_i(H^T)_{ij}s_j,
$
or equivalently,
$
\sigma(H)=S H^T S,
$
where \(S=\operatorname{diag}(s_1,\ldots,s_d)\).
\end{proof}

\begin{thm}\label{th5}
Let 
\(
|\psi\rangle_{AB} = \sum_{i=1}^{d} s_i \, |\chi_i\rangle_A \otimes |\zeta_i\rangle_B
\)
be a bipartite pure state with full Schmidt rank (\(s_i>0\) and \(\{|\chi_i\rangle\}_i,\{|\zeta_i\rangle\}_i\) orthonormal).
Let \(P = |a\rangle\langle a|\) and \(Q = |b\rangle\langle b|\) be rank-one projectors on system \(A\), and let \(\rho_B(P), \rho_B(Q)\) be the corresponding normalized reduced states on \(B\) given the outcome corresponding to $P/Q$ is observed on $A$.  
If \(\rho_B(P) = \rho_B(Q)\), then \(P=Q\). Hence distinct rank-one projectors on \(A\) generate distinct normalized reduced states on \(B\).
\end{thm}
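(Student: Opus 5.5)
The plan is to apply the preceding lemma directly. With $H=P=|a\rangle\langle a|$ it gives the unnormalized conditional state on $B$ as $\sigma(P)=S P^{T} S = S|\bar a\rangle\langle \bar a| S$. Here $|\bar a\rangle$ denotes the vector whose components in the Schmidt basis are the complex conjugates of those of $|a\rangle$. This holds because $(|a\rangle\langle a|)^T_{ij}=\langle\chi_j|a\rangle\langle a|\chi_i\rangle=\overline{\langle\chi_i|a\rangle}\,\langle\chi_j|a\rangle$.

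So $\sigma(P)=|v_a\rangle\langle v_a|$ with $|v_a\rangle=S|\bar a\rangle$, which is rank one. Similarly, $\sigma(Q)=|v_b\rangle\langle v_b|$ with $|v_b\rangle=S|\bar b\rangle$. Because the Schmidt rank is full, $S$ is invertible, so $|v_a\rangle\neq 0$. The outcome probability $p_P=\langle v_a|v_a\rangle$ is therefore strictly positive and the normalized state $\rho_B(P)=|v_a\rangle\langle v_a|/p_P$ is well defined. The same holds for $Q$.

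The key step is to equate the normalized states. Two rank-one projectors onto the normalized vectors $|v_a\rangle/\sqrt{p_P}$ and $|v_b\rangle/\sqrt{p_Q}$ coincide if and only if the vectors are proportional, so we get $|v_a\rangle=c\,|v_b\rangle$ for some nonzero complex $c$. Applying $S^{-1}$, which exists by full Schmidt rank, gives $|\bar a\rangle=c\,|\bar b\rangle$. Taking complex conjugates componentwise in the Schmidt basis gives $|a\rangle=\bar c\,|b\rangle$. Since $|a\rangle$ and $|b\rangle$ are unit vectors, $|c|=1$, and therefore $P=|a\rangle\langle a|=|b\rangle\langle b|=Q$. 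The contrapositive is the stated consequence: distinct rank-one projectors give distinct normalized reduced states.

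There is no real obstacle. The only points needing care are the conjugation bookkeeping in identifying $P^T$ as the projector onto $|\bar a\rangle$, and using full Schmidt rank in two places: to ensure the outcome probabilities are nonzero, so that the normalized states are defined, and to invert $S$. I would also remark that without full Schmidt rank the statement fails, since $S$ can annihilate the component distinguishing $|a\rangle$ from $|b\rangle$. This shows the hypothesis is essential.
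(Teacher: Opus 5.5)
Your proof is correct and follows essentially the paper's route: both apply the lemma $\sigma(H)=SH^{T}S$, cancel the invertible $S$ using full Schmidt rank, and conclude $P=Q$ from proportionality of rank-one projectors. You work with the vectors $S|\bar a\rangle$ rather than with the operators, skip the paper's detour through unitaries $U_A,U_B$ to the computational basis, and explicitly check that the outcome probabilities are nonzero, which the paper leaves implicit.
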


\begin{proof}
Let \(U_A\) and \(U_B\) be the unitaries mapping the Schmidt bases to the computational basis:
\[
U_A |i\rangle = |\chi_i\rangle, \quad U_B |i\rangle = |\zeta_i\rangle.
\]
Then 
\[
|\psi\rangle_{AB} = (U_A \otimes U_B) \sum_{i=1}^{d} s_i |i\rangle \otimes |i\rangle = (U_A \otimes U_B) |\Omega_S\rangle.
\]

For any operator \(H\) on \(A\), the unnormalized reduced state on \(B\) is
\[
\sigma(H) = \operatorname{Tr}_A \big[ (H \otimes I) |\psi\rangle\langle \psi| \big]
= U_B \, \operatorname{Tr}_A \big[ (U_A^\dagger H U_A \otimes I) |\Omega_S\rangle\langle \Omega_S| \big] U_B^\dagger.
\]

From the previous lemma, we can infer that 
\[
\operatorname{Tr}_A \big[ (\widetilde{H} \otimes I) |\Omega_S\rangle\langle \Omega_S| \big] = S \widetilde{H}^T S,
\]
where $S=\mathrm{diag}(s_1,\dots,s_d),  \widetilde{H}=U_A^\dagger H U_A$.
Hence
\[
\sigma(H) = U_B \, S \, (U_A^\dagger H U_A)^T \, S \, U_B^\dagger.
\]

Assume the normalized reduced states coincide:
\[
\frac{\sigma(P)}{\operatorname{Tr} \big[ (P \otimes I) |\psi\rangle\langle \psi| \big]} = \frac{\sigma(Q)}{\operatorname{Tr} \big[ (Q \otimes I) |\psi\rangle\langle \psi| \big]}.
\]
Then there exists \(c>0\) such that
\[
\sigma(P) = c\,\sigma(Q) \implies S (U_A^\dagger P U_A)^T S = c\, S (U_A^\dagger Q U_A)^T S.
\]
Since \(S\) is invertible,
$
(U_A^\dagger P U_A)^T = c\, (U_A^\dagger Q U_A)^T.
$

Both sides are rank-one projectors, which cannot be proportional unless \(c=1\) and the vectors differ by a global phase. Therefore
\[
U_A^\dagger P U_A = U_A^\dagger Q U_A \implies P=Q.
\]
\end{proof}

\emph{Proof of Theorem \ref{th_asym}: }  Alice starts the protocol by choosing the suitable product probe $\ket{0}\ket{\eta}$. After her measurements being implemented on the probe, if she finds first outcome is clicking, then it must be first or third measurement. Similarly, for the second outcome, she will know that the possibility is either second or fourth measurement. After the communication of Alice, Bob has to distinguish between $\{\sigma^z_b,\overline{\sigma}^z_b\}$ or $\{\sigma^x_b, \overline{\sigma}^x_b\}$. First set of measurements can be discriminated using $\ket{0}\ket{\eta}$ or $\ket{1}\ket{\eta}$ probe and second set by $\ket{+}\ket{\eta}$ or $\ket{-}\ket{\eta}$ probe. Thus the set of measurements being discriminated locally if Alice is the starter. Note that $\ket{\eta}$ is an arbitrary state because they do not contribute in distinguishability.

Now we want to find a successful protocol taking Bob as the commencing party. As Bob's measurements concerned, one can note that for any outcome, there are four different rank-one projectors acting on the probe. From Theorem \ref{th5}, we can infer that the four reduced states will be four different qubit states for each outcome. As the states spanned dimension two, the highest number of states can be eliminated from the set is one. After communicating to Alice, she has to distinguish between three qubit measurements, which eventually leads to the discrimination of three qubit states which is impossible. So these measurements are not LOCC distinguishable.

\section{Proof of Theorem~\ref{thm:bipartition-wise-indistinguishability}}\label{D}

Before going to the main proof, we present a necessary lemma and theorem which will help to shorten the
complexity of the main proof.

\begin{lemma}[Indistinguishability of rotated product measurements]
\label{lem:rotated-product-measurements}
Let
$$
U_\theta
=
\begin{pmatrix}
\cos\theta & -\sin\theta\\
\sin\theta & \cos\theta
\end{pmatrix},
$$
and consider the two $m$-qubit projective measurements
$\mathcal{Z}^{(m)}
=
\left\{
|\mathbf{x}\rangle\langle\mathbf{x}|
\right\}_{\mathbf{x}\in\{0,1\}^m}$
and
$\mathcal{U}_{\theta}^{(m)}
=
\left\{
U_\theta^{\otimes m}
|\mathbf{x}\rangle\langle\mathbf{x}|
(U_\theta^\dagger)^{\otimes m}
\right\}_{\mathbf{x}\in\{0,1\}^m}.$
If $0<m\theta<\frac{\pi}{2},$
then $\mathcal{Z}^{(m)}$ and
$\mathcal{U}_{\theta}^{(m)}$ cannot be perfectly distinguished with
a single use.
\end{lemma}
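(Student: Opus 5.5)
The plan is to reduce everything to the necessary condition of Lemma~\ref{lem:general-unitary-condition}, applied with $d=2^m$ and the unitary $U=U_\theta^{\otimes m}$. The measurement $\mathcal{U}^{(m)}_\theta$ has elements $U|\mathbf{x}\rangle\langle\mathbf{x}|U^\dagger$, so it is exactly of the form $\mathcal{M}_1$ in that lemma, while $\mathcal{Z}^{(m)}$ is $\mathcal{M}_0$. If the two measurements were perfectly distinguishable with a single use, with an arbitrary ancilla allowed, the lemma would give a density operator $\rho$ on the $m$ measured qubits with $\operatorname{Tr}(\rho\, U_\theta^{\otimes m})=0$. So it suffices to show that $\operatorname{Tr}(\rho\, U_\theta^{\otimes m})\neq 0$ for every state $\rho$ when $0<m\theta<\pi/2$.

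\textbf{Step 1: the spectrum of $U$.} The real rotation $U_\theta$ is unitary with eigenvalues $e^{\pm i\theta}$ and eigenvectors $(|0\rangle\mp i|1\rangle)/\sqrt2$. Hence $U_\theta^{\otimes m}$ is diagonal in the tensor-product eigenbasis. Its eigenvalues are $e^{ik\theta}$ with $k=(\#\text{plus signs})-(\#\text{minus signs})\in\{-m,-m+2,\dots,m\}$, so every eigenphase satisfies $|k\theta|\le m\theta<\pi/2$.

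\textbf{Step 2: a lower bound on the real part.} Write $U=\sum_j \lambda_j |e_j\rangle\langle e_j|$ in this eigenbasis. Then
\begin{equation*}
\operatorname{Tr}(\rho U)=\sum_j \lambda_j \langle e_j|\rho|e_j\rangle,
\end{equation*}
which is a convex combination of the eigenvalues, because the weights $\langle e_j|\rho|e_j\rangle\ge 0$ sum to $1$. Taking real parts gives
\begin{equation*}
\operatorname{Re}\operatorname{Tr}(\rho U)\ \ge\ \min_k \cos(k\theta)\ =\ \cos(m\theta)\ >\ 0,
\end{equation*}
where the last equality uses the monotonicity of $\cos$ on $[0,\pi/2)$. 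This contradicts $\operatorname{Tr}(\rho U)=0$, so no single-use perfect discrimination exists.

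I expect no real obstacle here; the only point requiring care is Step 1. One has to check that the eigenphases of the tensor power accumulate at most to $\pm m\theta$ and never wrap around past $\pi/2$; the hypothesis $m\theta<\pi/2$ is used exactly there. The condition $\theta>0$ only excludes the trivial identical case; for $\theta=0$ the measurements coincide anyway. One should also state explicitly that the lemma already covers purified probes with arbitrary ancillas, so no separate argument is needed for mixed or higher-dimensional probes.
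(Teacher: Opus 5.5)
Your proposal is correct and matches the paper's proof: both apply Lemma~\ref{lem:general-unitary-condition} with $U=U_\theta^{\otimes m}$. Both then rule out $\operatorname{Tr}(\rho U)=0$ by showing that its real part, $\operatorname{Tr}(\rho H_m)$ with $H_m=\frac{1}{2}[U_\theta^{\otimes m}+(U_\theta^\dagger)^{\otimes m}]$, is strictly positive, because every eigenphase $(m-2k)\theta$ lies in $(-\pi/2,\pi/2)$. Your convex-combination phrasing only makes the bound $H_m\geq\cos(m\theta)\,\I$ explicit.
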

\begin{proof}
Assume, to the contrary, that the two measurements can be perfectly
distinguished. By Lemma~\ref{lem:general-unitary-condition}, there
must exist a density operator $\rho$ on the $m$ measured qubits such
that $\operatorname{Tr}
\left(
\rho U_\theta^{\otimes m}
\right) =0.$
Taking the real part yields
\begin{equation}\label{rH=0}
    \operatorname{Tr}(\rho H_m)=0,
\end{equation}
where $ H_m = \frac{1}{2}
\left[U_\theta^{\otimes m} +
(U_\theta^\dagger)^{\otimes m}
\right].$ Since the eigenvectors of a tensor-product operator
are tensor products of the eigenvectors of its factors, an eigenvalue
of $U_\theta^{\otimes m}$ is obtained by multiplying $m$ eigenvalues of $U_\theta$. If $k$ factors contribute $e^{-i\theta}$ and the remaining $m-k$ factors contribute $e^{i\theta}$, then the resulting eigenvalue is $e^{i(m-2k)\theta}$. Therefore, the eigenvalues of $U_\theta^{\otimes m}$ are
$e^{i(m-2k)\theta},\
k=0,1,\ldots,m, $ with multiplicity $\binom{m}{k}$. Consequently, the eigenvalues of $H_m$ are $ \cos[(m-2k)\theta], \ k=0,1,\ldots,m.$
Since $|(m-2k)\theta| \leq m\theta < \frac{\pi}{2},$ we have $\cos[(m-2k)\theta]>0 \ \forall\,k.$ Hence $H_m>0.$
Since $\rho$ is a density operator, it follows that
$\operatorname{Tr}(\rho H_m)>0,$ which contradicts the necessary condition \eqref{rH=0}.
Therefore,
$\mathcal{Z}^{(m)}$ and $\mathcal{U}_{\theta}^{(m)}$ cannot be perfectly distinguished  whenever
$m\theta<\pi/2$.
\end{proof}


\begin{proof}[Proof of Theorem~\ref{thm:bipartition-wise-indistinguishability}]
We first prove global perfect distinguishability. Consider the probe
\begin{equation}
|\Psi_w\rangle =
\frac{1}{\sqrt{2^{w-1}}}
\sum_{\substack{x\in\{0,1\}^w\ |x|\ {\rm even}}}
(-1)^{|x|/2}|x\rangle \ket{\eta},
\end{equation}
where \(|x|=\sum_{i=1}^w x_i\) denotes the Hamming weight of \(x\) and $\ket{\eta}$ arbitrary.

If the unknown measurement is \(\mathcal{G}_{a_1a_2\cdots a_w|1}\), the probe (first part as the measurement acts on it) has support only on the even-parity strings,
$
\mathcal{G}_{a_1a_2\cdots a_w|1}:
\left\{
x\in\{0,1\}^w:\ |x|\ {\rm even}
\right\},
$
with each outcome occurring with probability \(1/2^{w-1}\).

Using
$
|0\rangle=\cos\theta\,|u\rangle+\sin\theta\,|u^\perp\rangle,
|1\rangle=\sin\theta\,|u\rangle-\cos\theta\,|u^\perp\rangle,
$
the probe state can be expanded directly in the
\(\{|u\rangle,|u^\perp\rangle\}^{\otimes w}\) basis. For an outcome containing \(w-r\) copies of \(|u\rangle\) and \(r\) copies of \(|u^\perp\rangle\), its amplitude is
\begin{align}
&\frac{1}{\sqrt{2^{w-1}}}
\sum_{\substack{x\in{0,1}^w\ |x|\ {\rm even}}}
(-1)^{|x|/2}
\prod_{i=1}^{w-r}\langle u|x_i\rangle
\prod_{i=w-r+1}^{w}\langle u^\perp|x_i\rangle
\nonumber\\
&=
\frac{1}{2\sqrt{2^{w-1}}}
\left[
(\cos\theta+i\sin\theta)^{w-r}
(\sin\theta-i\cos\theta)^r
+
(\cos\theta-i\sin\theta)^{w-r}
(\sin\theta+i\cos\theta)^r
\right]
\nonumber\\
&=
\frac{1}{\sqrt{2^{w-1}}}
\cos\left(w\theta-\frac{r\pi}{2}\right).
\end{align}

Thus, for \(\theta=\pi/(2w)\), the amplitude becomes
$
\frac{1}{\sqrt{2^{w-1}}}
\sin\left(\frac{r\pi}{2}\right),
$
which vanishes for even \(r\) and has magnitude \(1/\sqrt{2^{w-1}}\) for odd \(r\).
Hence the only possible outcomes of \(\mathcal{G}_{a_1a_2\cdots a_w|2}\) are the odd-parity strings,
$
\mathcal{G}_{a_1a_2\cdots a_w|2}:
\left\{
y\in{0,1}^w:\ |y|\ {\rm odd}
\right\},
$
each occurring with probability \(1/2^{w-1}\).
Thus, the two measurements have disjoint outcome supports:
$
\operatorname{supp}(\mathcal{G}_{a_1a_2\cdots a_w|1})
\cap
\operatorname{supp}(\mathcal{G}_{a_1a_2\cdots a_w|2})
=
\varnothing.
$
Consequently, the two \(m\)-partite product measurements
\(\mathcal{G}_{a_1a_2\cdots a_w|1}\) and \(\mathcal{G}_{a_1a_2\cdots a_w|2}\) are perfectly distinguishable
globally using the probe state \(|\Psi_w\rangle\).

We now prove that perfect discrimination is impossible across every bipartition. Since both measurements are invariant under permutations
of $A_1,A_2,\cdots,A_w$, it is sufficient to consider the bipartition $A_1|A_2\cdots A_w$. For the first party, the two possible measurements are $\sigma^z_{a_1}$ and $\sigma^u_{a_1}.$ Applying Lemma~\ref{lem:rotated-product-measurements} for these two measurements, taking $m=1$ and $\theta=\pi/2w$, we know that they cannot be perfectly distinguished since $\frac{\pi}{2w}<\frac{\pi}{2}.$

On the joint $A_2\cdots A_w$ side, the two possible measurements are
$\bigotimes_{i=2}^w\sigma^z_{a_i}
 \text{ and } \bigotimes_{i=2}^w\sigma^u_{a_i}.$
Here $m=w-1$, and $ (w-1)\theta = (w-1)\frac{\pi}{2w} < \frac{\pi}{2}$, if $w$ is finite.
Hence, again by
Lemma~\ref{lem:rotated-product-measurements}, these two measurements cannot be perfectly distinguished by any joint probe on $A_2\cdots A_w$. 
It remains to show that adaptive classical communication between
$A$ and $A_2\cdots A_w$ cannot overcome these local obstructions.

Suppose first that $A_1$ acts before the $A_2\cdots A_w$ laboratory. Since no
probe allows perfect discrimination between $\sigma^z_{a_1}$ and $\sigma^u_{a_1}$, $A_1$'s local procedure must possess at least one
branch, say $\lambda$, that occurs with nonzero probability under
both hypotheses and for which the corresponding residual states
$\alpha_{0|\lambda}$ and $\alpha_{1|\lambda}$ are not orthogonal.
Equivalently,
$ F
\left(
\alpha_{0|\lambda},
\alpha_{1|\lambda}
\right)
>0,$
where $F$ denotes the quantum fidelity. Indeed, if every branch common to the two hypotheses produced orthogonal residual states,
Alice could resolve the hypotheses perfectly on every such branch,
contradicting
Lemma~\ref{lem:rotated-product-measurements}.

After receiving the classical message $\lambda$, the joint $A_2\cdots A_w$
laboratory may choose an arbitrary probe depending on $\lambda$.
Nevertheless, it still has to distinguish
$\bigotimes_{i=2}^w\sigma^z_{a_i}
 \text{ from} \bigotimes_{i=2}^w\sigma^u_{a_i}.$ Since these
measurements are not perfectly distinguishable for any probe, there
exists a branch compatible with both hypotheses for which the
resulting states, denoted by $\beta_{0|\lambda}$ and
$\beta_{1|\lambda}$, satisfy
$F\left(\beta_{0|\lambda},
\beta_{1|\lambda}\right)>0.$ Conditioned on this common branch, the states associated with the two
hypotheses are of the form
$
\rho_{0|\lambda}
=
\alpha_{0|\lambda}
\otimes
\beta_{0|\lambda},
\rho_{1|\lambda}
=
\alpha_{1|\lambda}
\otimes
\beta_{1|\lambda}.
$
By multiplicativity of fidelity,
$
F
\left(
\rho_{0|\lambda},
\rho_{1|\lambda}
\right)
=
F
\left(
\alpha_{0|\lambda},
\alpha_{1|\lambda}
\right)
F
\left(
\beta_{0|\lambda},
\beta_{1|\lambda}
\right)
>0.
$
Any further finite sequence of local quantum operations and classical
communication defines a quantum channel $\Lambda$. Since fidelity is
monotonic under completely positive trace-preserving maps,
$
F
\left(
\Lambda(\rho_{0|\lambda}),
\Lambda(\rho_{1|\lambda})
\right)
\geq
F
\left(
\rho_{0|\lambda},
\rho_{1|\lambda}
\right)
>0.
$
The final states therefore cannot have orthogonal supports. Hence
perfect discrimination is impossible.

The same argument applies when the joint  laboratory acts first.
Since $\bigotimes_{i=2}^w\sigma^z_{a_i}
 \text{ and } \bigotimes_{i=2}^w\sigma^u_{a_i}$ cannot
be perfectly distinguished by any joint $A_2\cdots A_w$ probe, there exists a
common branch with nonorthogonal residual states. Conditioned on that
branch, $A_1$ subsequently faces $\sigma^z_{a_1}$ versus $\sigma^u_{a_1}$, which cannot be perfectly distinguished by any probe.
The same multiplicativity and monotonicity argument therefore rules
out perfect discrimination after any finite number of communication
rounds.

Thus $\mathcal{G}_{a_1a_2\cdots a_w|1}$ and $\mathcal{G}_{a_1a_2\cdots a_w|2}$ are not perfectly
distinguishable by finite-round LOCC across $A_1|A_2\cdots A_w$. By permutation
symmetry, the same conclusion holds across the other bipartitions. Therefore, $\mathcal{G}_{a_1a_2\cdots a_w|1}$ and $\mathcal{G}_{a_1a_2\cdots a_w|2}$ are perfectly
distinguishable globally but are not perfectly distinguishable by
finite-round LOCC across any bipartition.
\end{proof}

\end{document}